\newcounter{subcopyrightbox@save}
\newcommand\hl[1]{%
  \bgroup
  \hskip0pt\color{red!80!black}%
  #1%
  \egroup
}
\newtheorem{theorem}{Theorem}
\newcommand{\myparatight}[1]{\smallskip\noindent{\bf {#1}:}~}
\newenvironment{packeditemize}{\begin{list}{$\bullet$}{\setlength{\itemsep}{0.2pt}\addtolength{\labelwidth}{10pt}\setlength{\leftmargin}{\labelwidth}\setlength{\listparindent}{\parindent}\setlength{\parsep}{1pt}\setlength{\topsep}{0pt}}}{\end{list}}
\newcommand{\neil}[1]{{#1}}
\newcommand{\alan}[1]{{#1}}
\title{\Large \bf Local Model Poisoning Attacks to Byzantine-Robust Federated Learning}
\author{
{\rm Minghong Fang\thanks{Equal contribution. Minghong Fang performed this research when he was under the supervision of Neil Zhenqiang Gong.}$\:\;^{1}$, Xiaoyu Cao\textcolor{green!80!black}{\footnotemark[1]}${\:\;^{2}}$, Jinyuan Jia$^2$, Neil Zhenqiang Gong$^2$} \\
$^1$ECE Department, The Ohio State University, $^2$ECE Department, Duke University\\
$^1$fang.841@osu.edu, $^2$\{xiaoyu.cao, jinyuan.jia, neil.gong\}@duke.edu}
\begin{document}
\maketitle

\thispagestyle{headings}
\markright{\hfill To appear in the 29th Usenix Security Symposium, August 2020, Boston, MA\hfill}


\begin{abstract} 
In federated learning, multiple client devices jointly learn a machine learning model: each client device maintains a local model for its local training dataset, while a master device  maintains a global model via aggregating the local models from the client devices. The machine learning community recently proposed several federated learning methods that were claimed to be robust against Byzantine failures (e.g., system failures, adversarial manipulations) of certain client devices. 
\neil{In this work, we perform the first systematic study on \emph{local model poisoning attacks} to federated learning. We assume an attacker has compromised some client devices, and the attacker manipulates the local model parameters on the compromised client devices during the learning process such that the global model has a large testing error rate. We formulate our attacks as optimization problems and apply our attacks to four recent Byzantine-robust federated learning methods. Our empirical results on four real-world datasets show that our attacks can substantially increase the error rates of the models learnt by the federated learning methods that were claimed to be robust against Byzantine failures of some client devices. 
We generalize two defenses for data poisoning attacks to defend against our local model poisoning attacks. 
Our evaluation results show that one defense can effectively defend against our attacks in some cases, but the defenses are not effective enough in other cases, highlighting the need for new defenses against our local model poisoning attacks to federated learning.} 
\vspace{-2mm}
\end{abstract}


\section{Introduction}
\vspace{-1mm}
\label{sec:intro}

\myparatight{Byzantine-robust federated learning} In \emph{federated learning} (also known as \emph{collaborative learning})~\cite{Konen16,McMahan17}, the training dataset is decentralized among multiple client devices (e.g., desktops, mobile phones, IoT devices), which could belong to different users or organizations. These users/organizations do not want to share their local training datasets, but still desire to jointly learn a model. For instance, multiple hospitals may desire to learn a healthcare model without sharing their  sensitive data to each other. 
Each client device (called \emph{worker device}) maintains a \emph{local model}  for its local training dataset.  Moreover, the service provider has a \emph{master device} (e.g., cloud server), which maintains a \emph{global model}. Roughly speaking, federated learning repeatedly performs three steps: the master device sends the current global model to worker devices; worker devices update their local models using their local training datasets and the global model, and send the local models to the master device; and the master device computes a new global model via aggregating the local models according to a certain \emph{aggregation rule}. 

For instance, the \emph{mean} aggregation rule that takes the average of the local model parameters as the global model is widely used under non-adversarial settings. However, the global model can be arbitrarily  manipulated for mean  even if just one worker device is compromised~\cite{Blanchard17,Yin18}. Therefore, the machine learning community recently proposed multiple aggregation rules (e.g., Krum~\cite{Blanchard17}, Bulyan~\cite{Mhamdi18}, trimmed mean~\cite{Yin18}, and median~\cite{Yin18}), which aimed to be robust against Byzantine failures of certain worker devices.

\begin{figure}[!t]
\vspace{-5mm}
\centering
{\includegraphics[width=0.3 \textwidth]{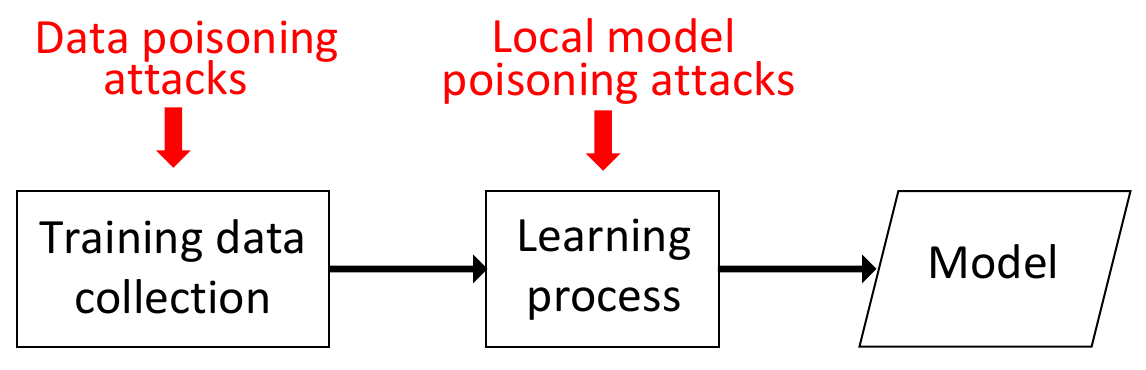}}
\caption{\emph{Data} vs. \emph{local model} poisoning attacks.}
\vspace{-4mm}
\label{flowchart}
\end{figure}

\neil{\myparatight{Existing data poisoning attacks are insufficient} 
We consider attacks that aim to manipulate the \emph{training phase} of machine learning such that the learnt model (we consider the model to be a classifier) has a high testing error rate indiscriminately for testing examples, which makes the model unusable and eventually leads to denial-of-service attacks. Figure~\ref{flowchart} shows the training phase, which includes two components, i.e., \emph{training dataset collection} and \emph{learning process}. The training dataset collection component is to collect a training dataset, while the learning process component produces a model from a given training dataset. Existing  attacks mainly inject malicious data into the training dataset before the learning process starts, while the learning process is assumed to maintain integrity. Therefore, these attacks are often called \emph{data poisoning attacks}~\cite{rubinstein2009antidote,biggio2012poisoning,xiao2015feature,poisoningattackRecSys16,Jagielski18,Suciu18}. {In federated learning,  an attacker could only inject the malicious data into the worker devices that are under the attacker's control.} As a result, these data poisoning attacks have limited success to attack Byzantine-robust federated learning (see our experimental results in Section~\ref{comparison-data}).}

\myparatight{Our work} We perform the first study on \emph{local model poisoning attacks} to Byzantine-robust federated learning. \neil{Existing studies~\cite{Blanchard17,Yin18} only showed local model poisoning attacks to federated learning with the non-robust mean aggregation rule.}

{\bf Threat model.} Unlike existing data poisoning attacks that compromise the integrity of training dataset collection, we aim to compromise the integrity of the learning process in the training phase (see Figure~\ref{flowchart}). 
We assume the attacker has control of some worker devices and manipulates the local model parameters sent from these devices to the master device during the learning process. The attacker may or may not know the aggregation rule used by the master device. To contrast with data poisoning attacks, we call our attacks {local model poisoning attacks} as they directly manipulate the local model parameters.

{\bf Local model poisoning attacks.} A key challenge of local model poisoning attacks is how to craft the local models sent from the compromised worker devices to the master device. To address this challenge, we formulate crafting local models as solving an optimization problem in each iteration of federated learning. Specifically, the master device could compute a global model in an iteration if there are no attacks, which we call \emph{before-attack} global model. Our goal is to craft the local models on the compromised worker devices such that the global model deviates the most towards the inverse of the direction along which the before-attack global model would change. Our intuition is that the deviations accumulated over multiple iterations would make the learnt global model differ from the before-attack one significantly. 
We apply our attacks to four recent Byzantine-robust federated learning methods including Krum, Bulyan, trimmed mean, and median. 

Our evaluation results on the MNIST, Fashion-MNIST,   CH-MNIST, and Breast Cancer Wisconsin (Diagnostic) datasets 
show that our attacks can substantially increase the error rates of the global models under various settings of federated learning. For instance, when learning a deep neural network classifier for  MNIST  using Krum, our attack can increase the error rate from 0.11 to 0.75. Moreover, we compare with data poisoning attacks including \emph{label flipping attacks} and \emph{back-gradient optimization based attacks}~\cite{munoz2017towards} (state-of-the-art untargeted data poisoning attacks for multi-class classifiers), which poison the local training datasets on the compromised worker devices. We find that these data poisoning attacks have limited success to attack the Byzantine-robust federated learning methods.

{\bf Defenses.} Existing defenses against data poisoning attacks essentially aim to sanitize the training dataset. 
One category of defenses~\cite{Cretu08,barreno2010security,Suciu18,Tran18} detects malicious data based on their negative impact on the error rate of the learnt model. For instance, \emph{Reject on Negative Impact (RONI)}~\cite{barreno2010security} measures the impact of each training example on the error rate of the learnt model and removes the training examples that have large negative impact. Another category of defenses~\cite{Feng14,Liu17AiSec,Jagielski18} leverages new loss functions, solving which detects malicious data and learns a model simultaneously. For instance, Jagielski et al.~\cite{Jagielski18}  proposed TRIM, which aims to jointly find a subset of training dataset with a given size and model parameters that minimize the loss function. The training examples that are not in the selected subset are treated as malicious data. However, these defenses are not directly applicable for our local model poisoning attacks because our attacks do not inject malicious data into the training dataset.

To address the challenge, we generalize RONI and TRIM to defend against our {local model poisoning attacks}. Both defenses remove the local models that are potentially malicious before computing the global model using a Byzantine-robust aggregation rule in each iteration. One defense removes the local models that have large negative impact on the error rate of the global model (inspired by RONI that removes training examples that have large negative impact on the error rate of the model), while the other defense removes the local models that result in large loss (inspired by TRIM that removes the training examples that have large negative impact on the loss), where the error rate and loss are evaluated on a validation dataset. We call the two defenses \emph{Error Rate based Rejection (ERR)} and \emph{Loss Function based Rejection (LFR)}, respectively. \neil{Moreover, we combine ERR and LFR, i.e., we remove the local models that are removed by either  ERR or LFR.  Our empirical evaluation results show that LFR outperforms ERR; and the combined defense is comparable to LFR in most cases.} Moreover, LFR can defend against our attacks in certain cases, but LFR is not effective enough in other cases. For instance, LFR can effectively defend against our attacks that craft local models based on the trimmed mean aggregation rule, but LFR is not effective against our attacks that are based on the Krum aggregation rule. Our results show that we need new defense mechanisms to defend against our local model poisoning attacks.

Our key contributions can be summarized as follows:
\begin{packeditemize}
\item We perform the first  systematic study on  attacking  Byzantine-robust federated learning. 

\item {We propose \emph{local model poisoning attacks} to Byzantine-robust federated learning.} 
Our attacks  manipulate the local model parameters on compromised worker devices during the learning process. 

\item {We generalize two defenses for data poisoning attacks to defend against local model poisoning attacks. Our results show that, although one of them is effective in some cases, they have limited success in other cases}. 

\end{packeditemize}


\section{Background and Problem Formulation}
\label{sec:background}

\subsection{Federated Learning}
\label{federatedlearning}
Suppose we have $m$ worker devices and the $i$th worker device has a local training dataset $D_i$. The worker devices aim to collaboratively learn a classifier. Specifically, the model parameters  $\mathbf{w}$ of the classifier are often obtained via solving the following optimization problem: $\min_{\mathbf{w}} \sum_{i=1}^m F(\mathbf{w}, D_i)$, 
where $F(\mathbf{w}, D_i)$ is the objective function for the local training dataset on the $i$th device and characterizes how well the parameters $\mathbf{w}$ model the local training dataset on the $i$th device. Different classifiers (e.g., logistic regression, deep neural networks) use different objective functions. 
In federated learning, each worker device maintains a local model for its local training dataset. Moreover, we have a master device to maintain a global model via aggregating local models from the $m$ worker devices. 
Specifically, federated learning performs the following three steps in each iteration:

{\bf Step I.} The master device sends the current global model parameters to all worker devices. 

{\bf Step II.} The worker devices update their local model parameters using the current global model parameters and their local training datasets in parallel. In particular, the $i$th worker device essentially aims to solve the optimization problem  $\min_{\mathbf{w}_i} F(\mathbf{w}_i, D_i)$ with the global model parameters $\mathbf{w}$ as an initialization of the local model parameters $\mathbf{w}_i$.  A worker device could use any method to solve the optimization problem, though \emph{stochastic gradient descent} is the most popular one. Specifically, the $i$th worker device updates its local model parameters $\mathbf{w}_i$ as $\mathbf{w}_i=\mathbf{w} - \alpha\cdot \frac{\partial F(\mathbf{w}, B_i)}{\partial \mathbf{w}}$, 
where $\alpha$ is the learning rate and $B_i$ is a randomly sampled batch from the local training dataset $D_i$. Note that a worker device could apply stochastic gradient descent multiple rounds to update its local model. 
After updating the local models, the worker devices send them to the master device. 

{\bf Step III.} The master device {aggregates} the local models from the worker devices to obtain a new global model according to a certain \emph{aggregation rule}. Formally, we have $ \mathbf{w}  = \mathcal{A}(\mathbf{w}_1, \mathbf{w}_2, \cdots, \mathbf{w}_m)$.

The master device could also randomly pick a subset of worker devices and send the global model to them; the picked worker devices update their local models and send them to the master device; and the master device aggregates the local models to obtain the new global model~\cite{McMahan17}. \neil{We note that, for the aggregation rules we study in this paper, sending local models to the master device is equivalent to sending gradients to the master device, who aggregates the gradients and uses them to update the global model.}

\subsection{Byzantine-robust Aggregation Rules}

A naive aggregation rule is to average the local model parameters as the global model parameters.  
This \emph{mean}  aggregation rule is widely used under non-adversarial settings~\cite{Dean12,Konen16,McMahan17}. However,   mean  is not robust under adversarial settings. In particular, an attacker can manipulate the global model parameters arbitrarily for this mean aggregation rule when compromising only one worker device~\cite{Blanchard17,Yin18}. 
Therefore, the machine learning community has recently developed multiple aggregation rules that aim to be robust even if certain worker devices exhibit Byzantine failures. 
Next, we review several such aggregation rules.

\myparatight{Krum~\cite{Blanchard17} and Bulyan~\cite{Mhamdi18}} Krum selects one of the $m$ local models  that is similar to other models as the global model. The intuition is that even if the selected local model is from a compromised worker device, its impact may be constrained since it is similar to other local models possibly from benign worker devices. Suppose at most $c$ worker devices are compromised. For each local model $\mathbf{w}_i$, the master device computes the $m-c-2$ local models that are the closest to $\mathbf{w}_i$ with respect to Euclidean distance. Moreover, the master device computes the sum of the squared distances between $\mathbf{w}_i$ and its closest $m-c-2$ local models. Krum selects the local model with the smallest sum of squared distance as the global model. When $c < \frac{m-2}{2}$, Krum has theoretical guarantees for the convergence for certain objective functions.

Euclidean distance between two local models could be substantially influenced by a single model parameter. Therefore, Krum could be influenced by some abnormal model parameters~\cite{Mhamdi18}. To address this issue, Mhamdi et al.~\cite{Mhamdi18} proposed Bulyan, which essentially combines Krum and a variant of trimmed mean (trimmed mean will be discussed next). Specifically, Bulyan first iteratively applies Krum to select $\theta$ ($\theta \leq m-2c$) local models. 
Then, Bulyan uses a variant of trimmed mean to aggregate the $\theta$ local models. In particular, for each $j$th model parameter, Bulyan sorts the $j$th parameters of the $\theta$ local models, finds the $\gamma$ ($\gamma \leq \theta - 2c$) parameters that are the closest to the median, and computes their mean as the $j$th parameter of the global model. When $c \leq \frac{m-3}{4}$, Bulyan has theoretical guarantees for the convergence under certain assumptions of the objective function.

Since Bulyan is based on Krum, our attacks for Krum can transfer to Bulyan (see Appendix~\ref{attackBulyan}). Moreover, Bulyan is not scalable because it executes Krum many times in each iteration and Krum computes pairwise distances between local models.  
Therefore, we will focus on Krum in the paper.   

\myparatight{Trimmed mean~\cite{Yin18}}  This aggregation rule aggregates each model parameter independently. Specifically, for each $j$th model parameter, the master device sorts the $j$th parameters of the $m$ local models, i.e., $w_{1j}, w_{2j}, \cdots, w_{mj}$, where $w_{ij}$ is the $j$th parameter of the $i$th local model, removes the largest and smallest $\beta$ of them, and computes the mean of the remaining $m-2\beta$ parameters as the $j$th parameter of the global model. Suppose at most $c$ worker devices are compromised. This trimmed mean aggregation rule achieves \emph{order-optimal} error rate when ${c} \leq \beta < \frac{m}{2}$ and the objective function to be minimized is strongly convex. Specifically, the order-optimal error rate is $\tilde{O}(\frac{c}{m\sqrt{n}} + \frac{1}{\sqrt{mn}})$,\footnote{$\tilde{O}$ is a variant of the $O$ notation, which ignores the logarithmic terms.\label{fn:1}} where $n$ is the number of training data points on a worker device (worker devices are assumed to have the same number of training data points). 

\myparatight{Median~\cite{Yin18}} In this median aggregation rule, for each  $j$th model parameter, the master device sorts the $j$th parameters of the $m$ local models and takes the median as the $j$th parameter of the global model. Note that when $m$ is an even number, median is the mean of the middle two parameters. Like the trimmed mean aggregation rule, the median aggregation rule also achieves an order-optimal error rate when the objective function is strongly convex.

\subsection{Problem Definition and Threat Model}
\label{sec:threatmodel}

\myparatight{Attacker's goal} Like many studies on poisoning attacks~\cite{rubinstein2009antidote,biggio2012poisoning,biggio2013poisoning,xiao2015feature,Jagielski18,poisoningattackRecSys16,YangRecSys17}, we consider an attacker's goal is to manipulate the learnt global model such that it has a high error rate indiscriminately for testing examples. Such attacks are known as \emph{untargeted poisoning attacks}, which make the learnt model unusable and eventually lead to denial-of-service attacks. \alan{For instance, an attacker may perform such attacks to its competitor's federated learning system.} Some studies also considered other types of poisoning attacks (e.g., \emph{targeted poisoning attacks}~\cite{Suciu18}), which we will review in Section~\ref{related}.

We note that the Byzantine-robust aggregation rules discussed above can \emph{asymptotically} bound the error rates of the learnt global model under certain assumptions of the objective functions, and some of them (i.e., trimmed mean and median) even achieve \emph{order-optimal} error rates. 
These theoretical guarantees seem to imply the difficulty of manipulating the error rates. However, the asymptotic guarantees do not precisely characterize the \emph{practical} performance of the learnt models. Specifically, the asymptotic error rates are quantified using the $\tilde{O}$ notation. The $\tilde{O}$ notation ignores any constant, e.g., $\tilde{O}(\frac{1}{\sqrt{n}})$=$\tilde{O}(\frac{100}{\sqrt{n}})$. However, such constant significantly influences a model's error rate in practice. As we will show, although these asymptotic error rates still hold for our local model poisoning attacks since they hold for Byzantine failures, our attacks can still significantly increase the testing error rates of the learnt models in practice.

\myparatight{Attacker's capability} We assume the attacker has control of $c$ worker devices. \alan{Specifically, like Sybil attacks~\cite{sybil} to distributed systems, the attacker could inject $c$ fake worker devices into the federated learning system or compromise $c$ benign worker devices.} However, we assume the number of worker devices under the attacker's control is less than 50\% (otherwise, it would be easy to manipulate the global models). We assume the attacker can arbitrarily manipulate the local models sent from these worker devices to the master device. For simplicity, we call these worker devices \emph{compromised worker devices} no matter whether they are fake devices or compromised benign ones.

\myparatight{Attacker's background knowledge} The attacker knows the code, local training datasets, and local models on the compromised worker devices. 
We characterize the attacker's background knowledge along the following two dimensions:

{\bf Aggregation rule.} 
We consider two scenarios depending on whether the attacker knows the aggregation rule or not. In particular, the attacker could know the aggregation rule in various scenarios. 
For instance, the service provider may make the aggregation rule public in order to increase transparency and trust of the federated learning system~\cite{McMahan17}.  
When the attacker does not know the aggregation rule, we will craft local model parameters for the compromised worker devices based on a certain aggregation rule. Our empirical results show that such crafted local models could also attack other aggregation rules. In particular, we observe different levels of \emph{transferability} of our local model poisoning attacks between different aggregation rules. 

{\bf Training data.}  We consider two cases (\emph{full knowledge} and \emph{partial knowledge}) depending on whether the attacker knows the local training datasets and local models on the benign worker devices. In the full knowledge scenario,  the attacker knows the local training dataset and local model on every worker device. 
We note that the full knowledge scenario has limited applicability in practice for federated learning as the training dataset is decentralized on many worker devices, and \neil{we use it to estimate the \emph{upper bound} of our attacks' threats for a given setting of federated learning}. In the partial knowledge scenario, the attacker only knows the local training datasets and local models on the compromised worker devices.

Our threat model is inspired by multiple existing studies~\cite{Papernot16,Papernot16Distillation,Jagielski18,Suciu18} on adversarial machine learning. 
For instance, Suciu et al.~\cite{Suciu18} recently proposed to characterize an attacker's background knowledge and capability for data poisoning attacks with respect to multiple dimensions such as \emph{Feature}, \emph{Algorithm}, and \emph{Instance}. Our aggregation rule and training data dimensions are essentially the Algorithm and Instance dimensions, respectively. We do not consider the Feature dimension because the attacker controls some worker devices and already knows the features in our setting. 

Some Byzantine-robust aggregation rules (e.g.,  Krum~\cite{Blanchard17} and trimmed mean~\cite{Yin18}) need to know the upper bound of the number of compromised worker devices in order to set parameters appropriately. For instance,  trimmed mean  removes the largest and smallest $\beta$ local model parameters, where $\beta$ is at least the number of compromised worker devices (otherwise trimmed mean can be easily manipulated). \alan{To calculate a lower bound for our attack's threat, we consider a hypothetical, strong service provider who knows the number of compromised worker devices and sets parameters in the aggregation rule accordingly.}


\section{Our Local Model Poisoning Attacks} 
We focus on the case where the aggregation rule is known. When the aggregation rule is unknown, we craft local models  based on an assumed one. Our empirical results in Section~\ref{unknownrule} show that our attacks have different levels of transferability between aggregation rules. 

\subsection{Optimization Problem}
\label{sec:opt}
\alan{Our idea is to manipulate the global model via carefully crafting the local models sent from the compromised worker devices to the master device in each iteration of  federated learning. 
We denote by $s_j$ the changing direction of the $j$th global model parameter in the current iteration when there are no attacks, where $s_j=1$ or $-1$.   $s_j=1$ (or $s_j=-1$) means that the $j$th global model parameter increases (or decreases) upon the previous iteration.  We consider the attacker's goal (we call it \emph{directed deviation goal}) is to deviate a global model parameter the most towards the inverse of the direction along which the global model parameter would change without attacks. 
Suppose in an iteration, $\mathbf{w}_i$ is the local model that the $i$th worker device intends to send to the master device when there are no attacks. Without loss of generality, we assume the first $c$ worker devices are compromised. 
Our directed deviation goal  is to craft local models $\mathbf{w}_1', \mathbf{w}_2', \cdots, \mathbf{w}_c'$ for the compromised worker devices via solving the following optimization problem in each iteration:
\begin{align}
\label{problem}
&\max_{\mathbf{w}_1', \cdots, \mathbf{w}_c'}  \mathbf{s}^T (\mathbf{w} - \mathbf{w}'),\nonumber \\
\text{subject to } & \mathbf{w}=\mathcal{A}(\mathbf{w}_1, \cdots, \mathbf{w}_c, \mathbf{w}_{c+1}, \cdots, \mathbf{w}_m), \nonumber \\
& \mathbf{w}'=\mathcal{A}(\mathbf{w}_1', \cdots, \mathbf{w}_c', \mathbf{w}_{c+1}, \cdots, \mathbf{w}_m),
\end{align}
where $\mathbf{s}$ is a column vector of the changing directions of all global model parameters, $\mathbf{w}$ is the before-attack global model, and $\mathbf{w}'$ is the after-attack  global model. \neil{Note that $\mathbf{s}$, $\mathbf{w}$, and $\mathbf{w}'$ all depend on the iteration number. Since our attacks manipulate the local models in each iteration, we omit the explicit dependency on the iteration number for simplicity.}

In our preliminary exploration of formulating poisoning attacks, we also considered a \emph{deviation goal}, which does not consider the global model parameters' changing directions. We empirically find that our attacks based on both the directed deviation goal and the deviation goal achieve high testing error rates for Krum. However, the directed deviation goal substantially outperforms the 
deviation goal for trimmed mean and median aggregation rules. Appendix~\ref{goals} shows our deviation goal and the empirical comparisons between deviation goal and directed deviation goal. }

\subsection{Attacking Krum} Recall that Krum selects one local model as the global model in each iteration. Suppose $\mathbf{w}$ is the selected local model in the current iteration when there are no attacks. Our goal is to craft the $c$ compromised local models such that the local model selected by Krum has the largest directed deviation from $\mathbf{w}$. Our  idea is to make Krum select a certain crafted local model (e.g., $\mathbf{w}_1'$ without loss of generality) via crafting the $c$ compromised local models. Therefore, we aim to solve the optimization problem in Equation~\ref{problem} with $\mathbf{w}'=\mathbf{w}_1'$ and the aggregation rule is Krum. 

\myparatight{Full knowledge} The key challenge of solving the optimization problem is that the constraint of the optimization problem is highly nonlinear and the search space of the local models $\mathbf{w}_1', \cdots, \mathbf{w}_c'$ is large. To address the challenge, we make two approximations.  Our approximations represent suboptimal solutions to the optimization problem, which means that the attacks based on the approximations may have suboptimal performance. However, as we will demonstrate in our experiments, our attacks already substantially increase the  error  rate of the learnt model. 

First, we restrict  $\mathbf{w}_{1}'$ as follows:  $\mathbf{w}_{1}' = \mathbf{w}_{Re} - \lambda \mathbf{s} $, where $\mathbf{w}_{Re}$ is the global model  received from the master device in the current iteration (i.e., the global model obtained in the previous iteration) and $\lambda > 0$. This approximation explicitly models the directed deviation between the crafted local model  $\mathbf{w}_{1}'$ and the received global model. 
We also explored the approximation $\mathbf{w}_{1}' = \mathbf{w} - \lambda \mathbf{s}$, which means that we explicitly model the directed deviation between the crafted local model and the local model selected by Krum before attack. However, we found that our attacks are less effective using this approximation.  

Second, to make $\mathbf{w}_1$ more likely to be selected by Krum, we craft the other $c-1$ compromised local models to be close to $\mathbf{w}_1'$. In particular, when the other $c-1$ compromised local models are close to $\mathbf{w}_1'$, $\mathbf{w}_1'$ only needs to have a small distance to $m-2c-1$ benign local models in order to be selected by Krum. In other words, the other $c-1$ compromised local models ``support'' the crafted local model $\mathbf{w}_1'$. In implementing our attack, we first assume the other $c-1$ compromised local models are the same as $\mathbf{w}_1'$, then we solve $\mathbf{w}_1'$, and finally we randomly sample $c-1$ vectors, whose distance to $\mathbf{w}_1'$ is at most $\epsilon$, as the other $c-1$ compromised local models. With our two approximations, we transform the optimization problem as follows:
\begin{align}
&\max_{\lambda} \lambda \nonumber \\
\label{problem2}
\text{subject to }&\mathbf{w}_1' = Krum(\mathbf{w}_1', \cdots, \mathbf{w}_c', \mathbf{w}_{(c+1)}, \cdots, \mathbf{w}_m), \nonumber \\
 			&\mathbf{w}_1'= \mathbf{w}_{Re} - \lambda \mathbf{s}, \nonumber \\
			& \mathbf{w}_i' = \mathbf{w}_1', \text{ for } i = 2, 3, \cdots, c.			
\end{align}
More precisely, the objective function in the above optimization problem should be $\mathbf{s}^T(\mathbf{w}-\mathbf{w}_{Re}) + \lambda \mathbf{s}^T\mathbf{s}$. However, $\mathbf{s}^T(\mathbf{w}-\mathbf{w}_{Re})$ is a constant and $\mathbf{s}^T\mathbf{s}=d$ where $d$ is the number of parameters in the global model. Therefore, we simplify the objective function to be just $\lambda$. 
After solving $\lambda$ in the optimization problem, we can obtain the crafted local model $\mathbf{w}_{1}'$. Then, we randomly sample $c-1$ vectors whose distance to $\mathbf{w}_1'$ is at most $\epsilon$ as the other $c-1$ compromised local models. We will explore the impact of $\epsilon$ on the effectiveness of our attacks in experiments. 

{\bf Solving $\lambda$.} Solving $\lambda$ in the optimization problem in Equation~\ref{problem2} is key to our attacks. First, we derive an upper bound of the solution $\lambda$ to the optimization problem. Formally, we have the following theorem. 

\begin{theorem}
\label{theoremLambda}
Suppose $\lambda$ is a solution to the optimization problem in Equation~\ref{problem2}. $\lambda$ is upper bounded as follows:
\begin{align}
\lambda \le & \sqrt{\frac{1}{(m-2c-1)d}  \cdot \min_{c+1\le i\le m}{\sum_{l\in {\tilde{\Gamma}_{\mathbf{w}_i}^{m-c-2}}}} D^2(\mathbf{w}_l,\mathbf{w}_i)} \nonumber\\
			&  +  \frac{1}{\sqrt{d}}\cdot \max_{c+1\le i\le m}{D(\mathbf{w}_i,\mathbf{w}_{Re})},
\end{align}
where $d$ is the number of parameters in the global model, $D(\mathbf{w}_l, \mathbf{w}_i)$ is the Euclidean distance between $\mathbf{w}_l$ and $\mathbf{w}_i $, $\tilde{\Gamma}_{w_i}^{m-c-2}$ is the set of $m-c-2$ benign local models that have the smallest Euclidean distance to  $\mathbf{w}_i$. 
\end{theorem}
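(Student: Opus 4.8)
The plan is to bound $\lambda$ by comparing the Krum \emph{score} of the attacker's model $\mathbf{w}_1'$ (the sum of squared Euclidean distances to its $m-c-2$ nearest other local models) against the scores of the benign models: the constraint in Equation~\ref{problem2} forces $\mathbf{w}_1'$ to be the model Krum selects, hence to attain the minimum score, and this gives an inequality from which $\lambda$ can be isolated.

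First I would analyze the score of $\mathbf{w}_1'$. Since $\mathbf{w}_2'=\cdots=\mathbf{w}_c'=\mathbf{w}_1'$, the $c-1$ other compromised models lie at distance $0$ from $\mathbf{w}_1'$ and are therefore necessarily among its $m-c-2$ nearest neighbors (this uses $m-c-2\ge c-1$, which follows from Krum's requirement $c<\frac{m-2}{2}$; the same requirement gives $m-2c-1\ge 2>0$, so the division in the claimed bound is legitimate). Consequently the score of $\mathbf{w}_1'$ equals the sum of the $m-2c-1$ smallest squared distances from $\mathbf{w}_1'$ to the benign models $\mathbf{w}_{c+1},\dots,\mathbf{w}_m$. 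Writing $\mathbf{w}_1'=\mathbf{w}_{Re}-\lambda\mathbf{s}$ and using $\|\mathbf{s}\|=\sqrt d$ together with the reverse triangle inequality, for every benign index $l$ we get $D(\mathbf{w}_1',\mathbf{w}_l)\ge \lambda\sqrt d - D(\mathbf{w}_{Re},\mathbf{w}_l)\ge \lambda\sqrt d - \max_{c+1\le j\le m}D(\mathbf{w}_{Re},\mathbf{w}_j)$. If $\lambda\sqrt d\le \max_j D(\mathbf{w}_{Re},\mathbf{w}_j)$ the claimed bound already holds (its second term alone dominates), so I may assume the right-hand side above is nonnegative, square it, and sum over the $m-2c-1$ nearest benign neighbors to obtain a lower bound on the score of $\mathbf{w}_1'$ that is quadratic in $\lambda$, namely $(m-2c-1)\bigl(\lambda\sqrt d-\max_j D(\mathbf{w}_{Re},\mathbf{w}_j)\bigr)^2$.

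Next I would upper bound the score of an arbitrary benign model $\mathbf{w}_i$, $c+1\le i\le m$: its $m-c-2$ nearest neighbors are drawn from the pool of all other local models (compromised and benign), and enlarging the candidate pool with the compromised models can only decrease the resulting sum, so the score of $\mathbf{w}_i$ is at most $\sum_{l\in\tilde\Gamma_{\mathbf{w}_i}^{m-c-2}}D^2(\mathbf{w}_l,\mathbf{w}_i)$. Because the constraint forces the score of $\mathbf{w}_1'$ to be no larger than that of every benign $\mathbf{w}_i$, it is no larger than $\min_{c+1\le i\le m}\sum_{l\in\tilde\Gamma_{\mathbf{w}_i}^{m-c-2}}D^2(\mathbf{w}_l,\mathbf{w}_i)$. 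Chaining this with the quadratic lower bound from the previous step and solving the resulting inequality for $\lambda$ — taking square roots and isolating $\lambda$ — yields exactly the stated bound. The step I expect to be the main obstacle is the combinatorial bookkeeping in the first two paragraphs: rigorously arguing which models are guaranteed to be counted among the $m-c-2$ closest neighbors (that the $c-1$ duplicated compromised models contribute zero to $\mathbf{w}_1'$'s score, and that including compromised models in a benign model's neighbor pool only lowers its score), while keeping track of the degenerate small-$\lambda$ case where the reverse-triangle estimate becomes vacuous; the remaining algebra is routine.
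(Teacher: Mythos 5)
Your proposal is correct and follows essentially the same route as the paper's proof: it uses the minimality of the Krum score of $\mathbf{w}_1'$, the fact that the $c-1$ duplicated compromised models contribute zero so the score reduces to the $m-2c-1$ nearest benign models, the (reverse) triangle inequality together with $D(\mathbf{w}_1',\mathbf{w}_{Re})=\lambda\sqrt{d}$, and the replacement of the mixed neighbor set by the all-benign set $\tilde\Gamma_{\mathbf{w}_i}^{m-c-2}$ on the right-hand side. If anything, your treatment is slightly more careful than the paper's in the final algebraic step, where you explicitly dispose of the degenerate case $\lambda\sqrt{d}\le\max_j D(\mathbf{w}_{Re},\mathbf{w}_j)$ before squaring and isolating $\lambda$.
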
  

\begin{proof}
See Appendix~\ref{appendix:proof}.
\end{proof}

Given the upper bound, we use a binary search to solve $\lambda$. Specifically, we initialize $\lambda$ as the upper bound and check whether Krum selects $\mathbf{w}_{1}'$ as the global model; if not, then we half $\lambda$; we repeat this process until Krum selects $\mathbf{w}_{1}'$ or $\lambda$ is smaller than a certain threshold (this indicates that the optimization problem may not have a solution). In our experiments, we use $1\times10^{-5}$ as the threshold. 

\myparatight{Partial knowledge} In the partial knowledge scenario, the attacker does not know the local models on the benign worker devices, i.e., $\mathbf{w}_{(c+1)}, \cdots, \mathbf{w}_m$. As a result, the attacker does not know the changing directions $\mathbf{s}$ and cannot solve the optimization problem in Equation~\ref{problem2}. However, the attacker has access to the before-attack local models on the $c$ compromised worker devices. Therefore, we propose to craft compromised local models based on these before-attack local models. First, we compute the mean of the $c$ before-attack local models as  $\tilde{\mathbf{w}}=\frac{1}{c}\sum_{i=1}^c \mathbf{w}_i$. Second, we estimate the changing directions using the mean local model. Specifically, if the mean of the $j$th parameter is larger than the $j$th global model parameter received from the master device in the current iteration, then we estimate the changing direction for the $j$th parameter to be $1$, otherwise we estimate it to be $-1$. For simplicity, we denote by $\tilde{\mathbf{s}}$ the vector of estimated changing directions. 

Third, we treat the before-attack local models on the compromised worker devices as if they were local models on benign worker devices, and we aim to craft local model $\mathbf{w}_{1}'$ such that, among the crafted local model and the $c$ before-attack local models, Krum selects the crafted local model. 
 Formally, we have the following optimization problem:
\begin{align}
&\max_{\lambda} \lambda \nonumber \\
\label{problem3}
\text{subject to }&\mathbf{w}_1' = Krum(\mathbf{w}_1',  \mathbf{w}_{1}, \cdots, \mathbf{w}_c), \nonumber \\
 			&\mathbf{w}_1'= \mathbf{w}_{Re} - \lambda \tilde{\mathbf{s}}.
\end{align}

Similar to Theorem~\ref{theoremLambda}, we can also derive an upper bound of $\lambda$ for the optimization problem in Equation~\ref{problem3}. Moreover, similar to the full knowledge scenario, we use a binary search to solve $\lambda$. However, unlike the full knowledge scenario, if we cannot find a solution $\lambda$ until $\lambda$ is smaller than a threshold (i.e., $1\times10^{-5}$), then we add one more crafted local model $\mathbf{w}_2'$ such that among the crafted local models $\mathbf{w}_1'$, $\mathbf{w}_2'$, and the $c$ before-attack local models, Krum selects the crafted local model $\mathbf{w}_1'$. Specifically, we solve the optimization problem in Equation~\ref{problem3} with $\mathbf{w}_2'$ added into the Krum aggregation rule. Like the full knowledge scenario, we assume $\mathbf{w}_2'=\mathbf{w}_1'$. If we still cannot find a solution $\lambda$ until $\lambda$ is smaller than the threshold, we add another crafted local model. We repeat this process until finding a solution $\lambda$. We find that such iterative searching process makes our attack more effective for Krum in the partial knowledge scenario. 
After solving $\lambda$, we obtain the crafted local model $\mathbf{w}_1'$. Then, like the full knowledge scenario, we randomly sample  $c-1$ vectors whose distance to $\mathbf{w}_1'$ is at most $\epsilon$ as the other $c-1$ compromised local models.

\subsection{Attacking Trimmed Mean} 
Suppose ${w}_{ij}$ is the $j$th \emph{before-attack} local model parameter on the $i$th worker device and ${w}_{j}$ is the $j$th before-attack global model parameter in the current iteration. 
We discuss how we craft each local model parameter on the compromised worker devices.  
We denote by $w_{max,j}$ and $w_{min,j}$ the maximum and minimum of the $j$th local model parameters on the benign worker devices, i.e., $w_{max,j}$=$\max\{{w}_{(c+1)j}, {w}_{(c+2)j}, \cdots, {w}_{mj}\}$ and $w_{min,j}$=$\min\{{w}_{(c+1)j}, {w}_{(c+2)j}, \cdots, {w}_{mj}\}$. 

\myparatight{Full knowledge}
Theoretically, we can show that the following attack can maximize the directed deviations of the global model (i.e., an optimal solution to the optimization problem in Equation~\ref{problem}): if $s_j=-1$, 
then we use any $c$ numbers that are larger than $w_{max,j}$ as the $j$th local model parameters on the $c$ compromised worker devices, otherwise we use any $c$ numbers that are smaller than $w_{min,j}$ as the $j$th local model parameters on the $c$ compromised worker devices.

Intuitively, our attack crafts the compromised local models based on the maximum or  minimum benign local model parameters, depending on which one deviates the global model towards the inverse of the direction along which the global model would change without attacks. The sampled $c$ numbers should be close to $w_{max,j}$ or $w_{min,j}$ to avoid being outliers and being detected easily. Therefore, when implementing the  attack, if $s_j=-1$, then we randomly sample the $c$ numbers in the interval [$w_{max,j}, b\cdot w_{max,j}$] (when $w_{max,j} > 0$) or [$w_{max,j}, w_{max,j}/b$] (when $w_{max,j} \leq 0$), otherwise we randomly sample the $c$ numbers in the interval [$w_{min,j}/b, w_{min,j}$] (when $w_{min,j} > 0$) or [$b\cdot w_{min,j}, w_{min,j}$] (when $w_{min,j} \leq 0$). Our attack does not depend on $b$ once $b>1$. In our experiments, we set $b=2$.

\myparatight{Partial knowledge} An attacker faces two challenges in the partial knowledge scenario. First, the attacker does not know the changing direction variable $s_j$ because the attacker does not know the local models on the benign worker devices.  
Second, for the same reason, the attacker does not know the maximum $w_{max,j}$ and minimum $w_{min,j}$ of the benign local model parameters. 
Like Krum, to address the first challenge, we estimate the changing direction variables using the local models on the compromised worker devices. 

One naive strategy to address the second challenge is to use a very large number as $w_{max,j}$ or a very small number as $w_{min,j}$. However, if we craft the compromised local models based on $w_{max,j}$ or $w_{min,j}$ that are far away from their true values, the crafted local models may be outliers and the master device may detect the compromised local models easily. 
Therefore, we propose to estimate $w_{max,j}$ and $w_{min,j}$ using the before-attack local model parameters on the compromised worker devices. 
In particular, the attacker can compute the mean $\mu_j$ and standard deviation $\sigma_j$ of each $j$th parameter on the compromised worker devices. 

Based on the assumption that each $j$th  parameters of the benign worker devices are samples from a Gaussian distribution with mean $\mu_j$ and standard deviation $\sigma_j$, we can estimate that $w_{max,j}$ is smaller than $\mu_j + 3\sigma_j$ or $\mu_j + 4\sigma_j$ with large probabilities; and  $w_{min,j}$ is larger than $\mu_j - 4\sigma_j$ or $\mu_j - 3\sigma_j$ with large probabilities. Therefore, when $s_j$ is estimated to be $-1$, we sample $c$ numbers from the interval $[\mu_j + 3\sigma_j, \mu_j + 4\sigma_j]$ as the $j$th parameter of the $c$ compromised local models, which means that the crafted compromised local model parameters are larger than the maximum of the benign local model parameters with a high probability (e.g., 0.898 -- 0.998 when $m=100$ and $c=20$ under the Gaussian distribution assumption). When $s_j$ is estimated to be $1$, we sample $c$ numbers from the interval $[\mu_j - 4\sigma_j, \mu_j - 3\sigma_j]$ as the $j$th parameter of the $c$ compromised local models, which means that the crafted compromised local model parameters are smaller than the minimum of the benign local model parameters with a high probability.  The $j$th model parameters on the benign worker devices may not accurately follow a Gaussian distribution. However, our attacks are still effective empirically.

\subsection{Attacking Median} We use the same attacks for trimmed mean to attack the median aggregation rule. For instance, in the full knowledge scenario, 
 we randomly sample the $c$ numbers in the interval [$w_{max,j}, b\cdot w_{max,j}$]  or [$w_{max,j}, w_{max,j}/b$] if $s_j=-1$, otherwise we randomly sample the $c$ numbers in the interval [$w_{min,j}/b, w_{min,j}$]  or [$b\cdot w_{min,j}, w_{min,j}$]. 


\section{Evaluation}
\label{sec:exp}

\alan{We evaluate the effectiveness of our attacks using multiple datasets in different scenarios, e.g., the impact of different parameters and known vs. unknown aggregation rules. Moreover, we compare our attacks with existing attacks.}

\subsection{Experimental Setup}
\label{exp:setup}

\myparatight{Datasets} We consider four datasets: MNIST, Fashion-MNIST, CH-MNIST~\cite{kather2016multi}\footnote{We use a pre-processed version from \url{https://www.kaggle.com/kmader/colorectal-histology-mnist\#hmnist_64_64_L.csv}.} and Breast Cancer Wisconsin (Diagnostic) \cite{Dua:2019}. MNIST and Fashion-MNIST each includes  60,000 training examples and 10,000 testing examples, where each example is an 28$\times$28 grayscale image. 
Both datasets are 10-class classification problems. The CH-MNIST dataset consists of 5000 images of histology tiles from patients with colorectal cancer. The dataset is an  8-class classification problem. Each image has 64$\times$64 grayscale pixels. We randomly select 4000 images as the training examples and use the remaining 1000 as the testing examples. \alan{The Breast Cancer Wisconsin (Diagnostic) dataset is a binary classification problem to diagnose whether a person has breast cancer. The dataset contains 569 examples, each of which has 30 features describing the characteristics of a person's cell nuclei. We randomly select 455 (80\%) examples as the training examples, and use the remaining 114 examples as the testing examples.}

\myparatight{Machine learning classifiers} We consider the following classifiers.

{\bf Multi-class logistic regression (LR).} The considered aggregation rules have theoretical guarantees for the error rate of LR classifier. 

{\bf Deep neural networks (DNN).} For MNIST, Fashion-MNIST, and Breast Cancer Wisconsin (Diagnostic), we use a DNN with the architecture described in Table~\ref{DNN-architecture} in Appendix. We use ResNet20~\cite{he2016deep} for CH-MNIST.   Our DNN architecture does not necessarily achieve the smallest error rates for the considered datasets, as our goal is not to search for the best DNN architecture. Our goal is to show that our attacks can increase the testing error rates of the learnt DNN classifiers.

\myparatight{Compared attacks} We compare the following attacks.

{\bf Gaussian attack.} This attack randomly crafts the local models on the compromised worker devices. Specifically, for each $j$th model parameter, we estimate a Gaussian distribution using the before-attack local models on all worker devices. Then, for each compromised worker device, we sample a number from the Gaussian distribution and treat it as the $j$th parameter of the local model on the compromised worker device. We use this Gaussian attack to show that  crafting compromised local models randomly can not effectively attack the Byzantine-robust aggregation rules.

{\bf Label flipping attack.} This is a data poisoning attack that does not require  knowledge of the training data distribution. On each compromised worker device, this attack flips the label of each training instance. Specifically, we flip a label $l$ as $L-l-1$, where $L$ is the number of classes in the classification problem and $l=0,1,\cdots, L-1$. 

{\bf Back-gradient optimization based attack~\cite{munoz2017towards}.} This is the state-of-the-art untargeted data poisoning attack for multi-class classifiers. We note that this attack is not scalable and thus we compare our attacks with this attack on a subset of MNIST separately. The results are shown in Section~\ref{comparison-data}.  

{\bf Full knowledge attack or partial knowledge attack.} Our attack when the attacker knows the local models on all worker devices or the compromised ones.

\begin{table}[!t]\renewcommand{\arraystretch}{0.9}
\centering
\caption{\neil{Default setting for key parameters.}}
\addtolength{\tabcolsep}{-2pt}
\begin{tabular}{|c|c|c|} \hline 
{\small Parameter}  & {\small Description} & {\small Value}  \\ \hline
{\small  $m$}  & {\small Number of worker devices.} & {\small 100}  \\ \hline
{\small $c$}   & {\small Number of compromised worker devices.} & {\small 20}  \\ \hline
{\small $p$}   & {\small Degree of Non-IID.} & {\small 0.5}  \\ \hline
{\small $\epsilon$}   & {\small Distance parameter for Krum attacks.} & {\small 0.01}  \\ \hline
{\small $\beta$}   & {\small  Parameter of trimmed mean.} & {\small $c$}  \\ \hline
\end{tabular}
\label{defaultparameter}
\vspace{-2mm}
\end{table}

\myparatight{Parameter setting} We describe parameter setting for the federated learning algorithms and our attacks. Table~\ref{defaultparameter} summarizes the default setting for key parameters. \alan{We use MXNet~\cite{chen2015mxnet} to implement federated learning and attacks. We repeat each experiment for 50 trials and report the average results. We observed that the variances are very small, so we omit them for simplicity.}
\label{parametersetting}

{\bf Federated learning algorithms.} By default, we assume $m=100$ worker devices; each worker device applies one round of stochastic gradient descent to update its local model; and the master device aggregates local models from all worker devices. 
One unique characteristic of federated learning is that the local training datasets on different devices may not be \emph{independently and identically distributed} (i.e., \emph{non-IID})~\cite{McMahan17}. We simulate federated learning with different non-IID training data distributions. Suppose we have $L$ classes in the classification problem, e.g., $L=10$ for the MNIST and Fashion-MNIST datasets, and $L=8$ for the CH-MNIST dataset. We evenly split the worker devices into $L$ groups. \alan{We model non-IID federated learning by assigning a training instance with label $l$ to the $l$th group with probability $p$, where $p > 0$.}  
A higher $p$ indicates a higher degree of non-IID. For convenience, we call the probability $p$ \emph{degree of non-IID}.  Unless otherwise mentioned, we set $p=0.5$. 

We set 500 iterations for the LR classifier on  MNIST;  we set 2,000 iterations for the DNN classifiers on all four datasets; and we set the batch size to be 32 in stochastic gradient descent, except that we set the batch size to be 64 for Fashion-MNIST as such setting leads to a more accurate model.  
The trimmed mean aggregation rule prunes the largest and smallest $\beta$ parameters, where $ c \leq \beta < \frac{m}{2}$. Pruning more parameters leads to larger testing error rates without attacks. By default, we consider $\beta=c$ as the authors of trimmed mean did~\cite{Yin18}.

{\bf Our attacks.} Unless otherwise mentioned, we consider 20 worker devices are compromised. Our attacks to Krum have a parameter $\epsilon$, which is related to the distance between the crafted compromised local models. We set $\epsilon=0.01$ (we will study the impact of $\epsilon$ on our attack). We do not set $\epsilon=0$ because $\epsilon=0$ makes the $c$ compromised local models exactly the same, making the compromised local models easily detected by the master device. 
Our attacks to trimmed mean and median have a parameter $b$ in the full knowledge scenario, where $b>1$. Our attacks do not depend on $b$ once  $b>1$. 
We set $b=2$. \neil{Unless otherwise mentioned, we assume that attacker manipulates the local models on the compromised worker devices in each iteration.}

\begin{table}[!t]\renewcommand{\arraystretch}{1}
\centering
\caption{\neil{Testing error rates of various attacks.}}
\vspace{-1mm}
\centering
\addtolength{\tabcolsep}{-4pt}
\subfloat[LR classifier, MNIST]{
\begin{tabular}{|c|c|c|c|c|c|} \hline 
{\small } & {\small NoAttack} & {\small Gaussian} & {\small LabelFlip} &{\small Partial} & {\small Full}  \\ \hline
{\small Krum}  & {\small 0.14} & {\small 0.13} & {\small 0.13} &{\small 0.72} & {\small 0.80}  \\ \hline
{\small Trimmed mean}  & {\small 0.12} & {\small 0.11} & {\small 0.13} &{\small 0.23} & {\small 0.52}  \\  \hline
{\small Median}  & {\small 0.13} & {\small 0.13} & {\small 0.15} &{\small 0.19} & {\small 0.29}  \\ \hline 
\end{tabular}
\label{attackeffective-MNIST-LR}
}

\subfloat[DNN classifier, MNIST]{
\begin{tabular}{|c|c|c|c|c|c|} \hline 
{\small } & {\small NoAttack} & {\small Gaussian} & {\small LabelFlip} &{\small Partial} & {\small Full}  \\ \hline
{\small Krum}  & {\small 0.11} & {\small 0.10} & {\small 0.10} &{\small 0.75} & {\small 0.77}  \\ \hline
{\small Trimmed mean}  & {\small 0.06} & {\small 0.07} & {\small 0.07} &{\small 0.14} & {\small 0.23}  \\  \hline
{\small Median}  & {\small 0.06} & {\small 0.06} & {\small 0.16} &{\small 0.28} & {\small 0.32}  \\ \hline
\end{tabular}
\label{attackeffective-MNIST-DNN}
}

\subfloat[DNN classifier, Fashion-MNIST]{
	\begin{tabular}{|c|c|c|c|c|c|} \hline 
	{\small } & {\small NoAttack} & {\small Gaussian} & {\small LabelFlip} &{\small Partial} & {\small Full}  \\ \hline
	{\small Krum}  & {\small 0.16} & {\small 0.16} & {\small 0.16} &{\small 0.90} & {\small 0.91}  \\ \hline
	{\small Trimmed mean}  & {\small 0.10} & {\small 0.10} & {\small 0.12} &{\small 0.26} & {\small 0.28}  \\ \hline
	{\small Median}  & {\small 0.09} & {\small 0.12} & {\small 0.12} &{\small 0.21} & {\small 0.29}  \\ \hline
\end{tabular}
\label{attackeffective-fashion}
}

\subfloat[DNN classifier, CH-MNIST]{
\begin{tabular}{|c|c|c|c|c|c|} \hline 
{\small } & {\small NoAttack} & {\small Gaussian} & {\small LabelFlip} &{\small Partial} & {\small Full}  \\ \hline
{\small Krum}  & {\small 0.29} & {\small 0.30} & {\small 0.43} &{\small 0.73} & {\small 0.81}  \\ \hline
{\small Trimmed mean}  & {\small 0.17} & {\small 0.25} & {\small 0.37} &{\small 0.69} & {\small 0.69}  \\ \hline
{\small Median}  & {\small 0.17} & {\small 0.20} & {\small 0.17} &{\small 0.57} & {\small 0.63}  \\ \hline
\end{tabular}
\label{attackeffective-CHMNIST}}

\subfloat[DNN classifier, Breast Cancer Wisconsin (Diagnostic)]{
\begin{tabular}{|c|c|c|c|c|c|} \hline 
{\small } & {\small NoAttack} & {\small Gaussian} & {\small LabelFlip} &{\small Partial} & {\small Full}  \\ \hline
{\small Krum}  & {\small 0.03} & {\small 0.04} & {\small 0.14} &{\small 0.17} & {\small 0.17}  \\ \hline
{\small Trimmed mean}  & {\small 0.02} & {\small 0.03} & {\small 0.05} &{\small 0.14} & {\small 0.15}  \\ \hline
{\small Median}  & {\small 0.03} & {\small 0.03} & {\small 0.04} &{\small 0.17} & {\small 0.18}  \\ \hline
\end{tabular}
\label{attackeffective-BreastCancer}}
\vspace{-8mm}
\label{overallresults}
\end{table}

\subsection{Results for Known Aggregation Rule}
\label{sec:knownrule}

\begin{figure*}[!t]
\centering
\subfloat[Krum]{\includegraphics[width=0.33 \textwidth]{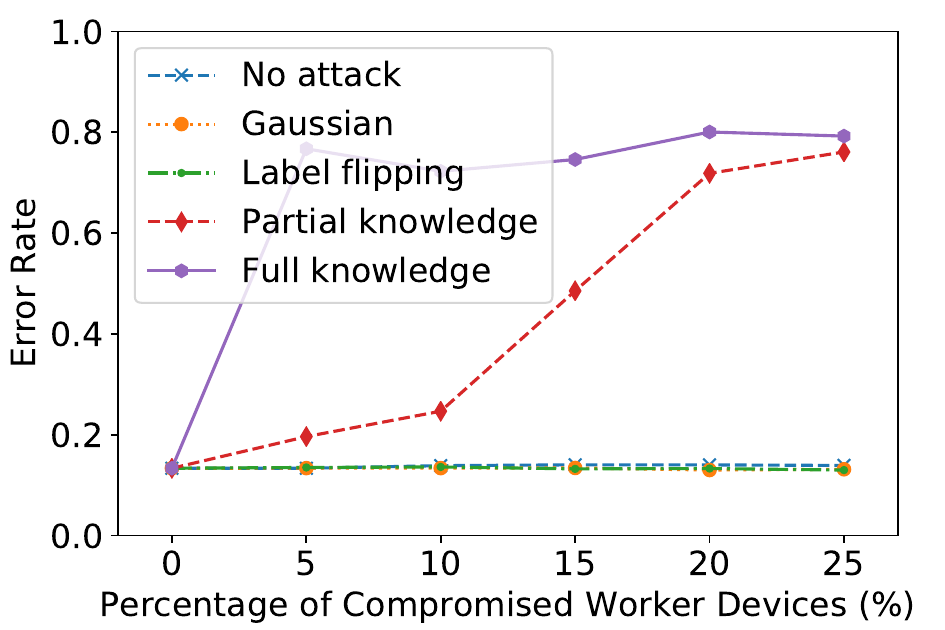}\label{label:MNIST-LR-Krum-federated}}
\subfloat[Trimmed mean]{\includegraphics[width=0.33 \textwidth]{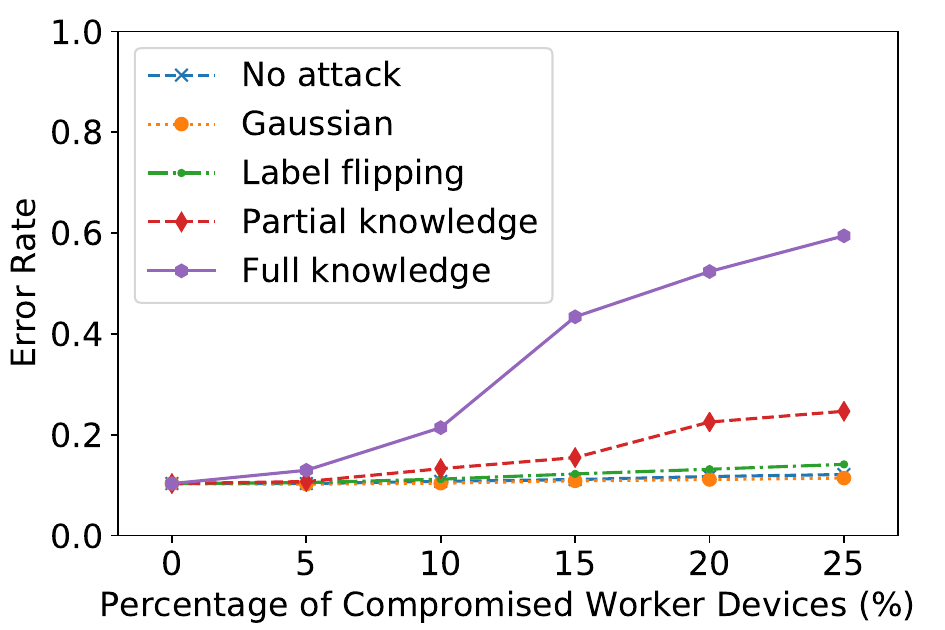}}
\subfloat[Median]{\includegraphics[width=0.33 \textwidth]{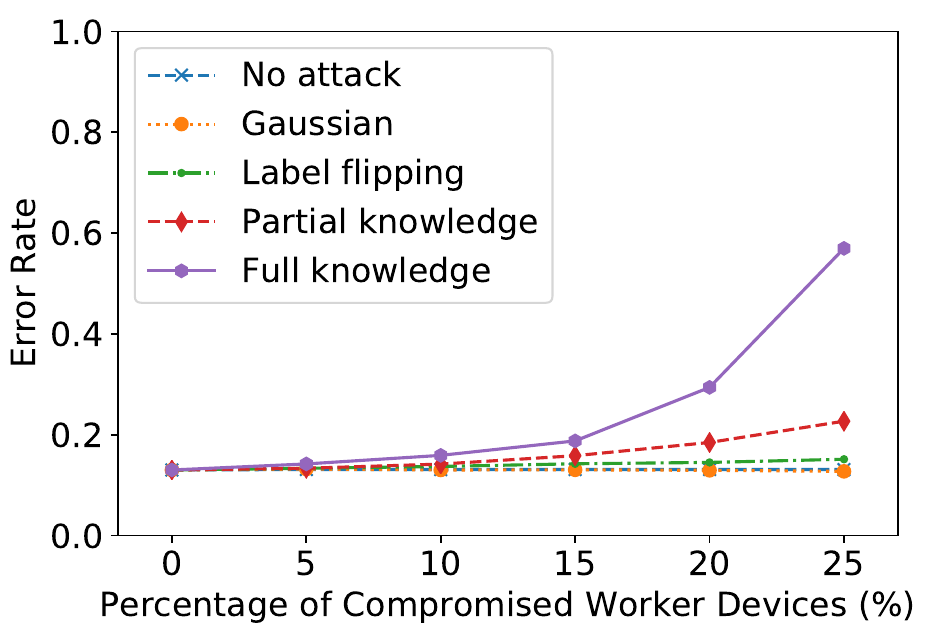}}
\vspace{-2mm}
\subfloat[Krum]{\includegraphics[width=0.33 \textwidth]{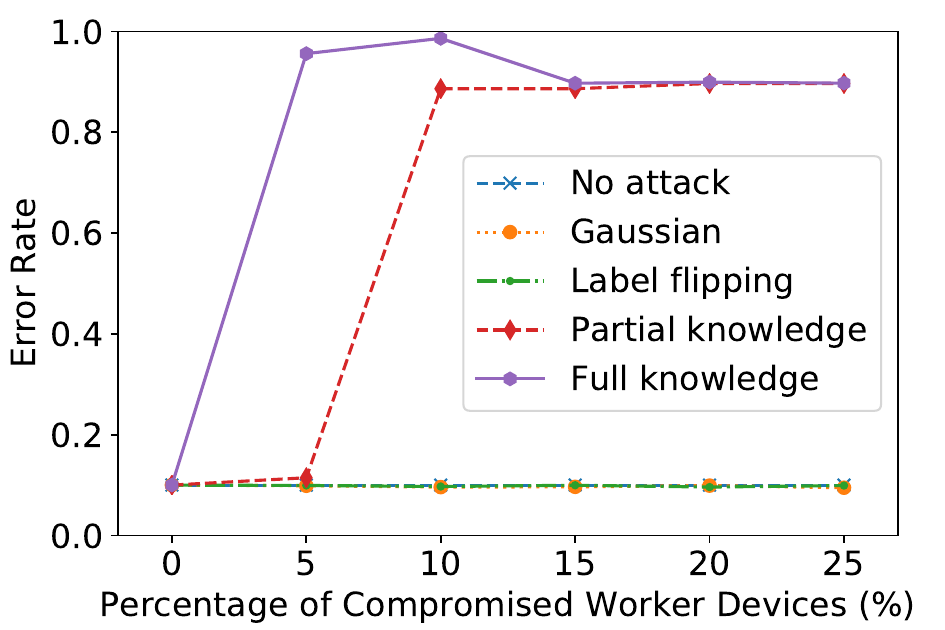}\label{label:MNIST-CNN-Krum-federated}}
\subfloat[Trimmed mean]{\includegraphics[width=0.33 \textwidth]{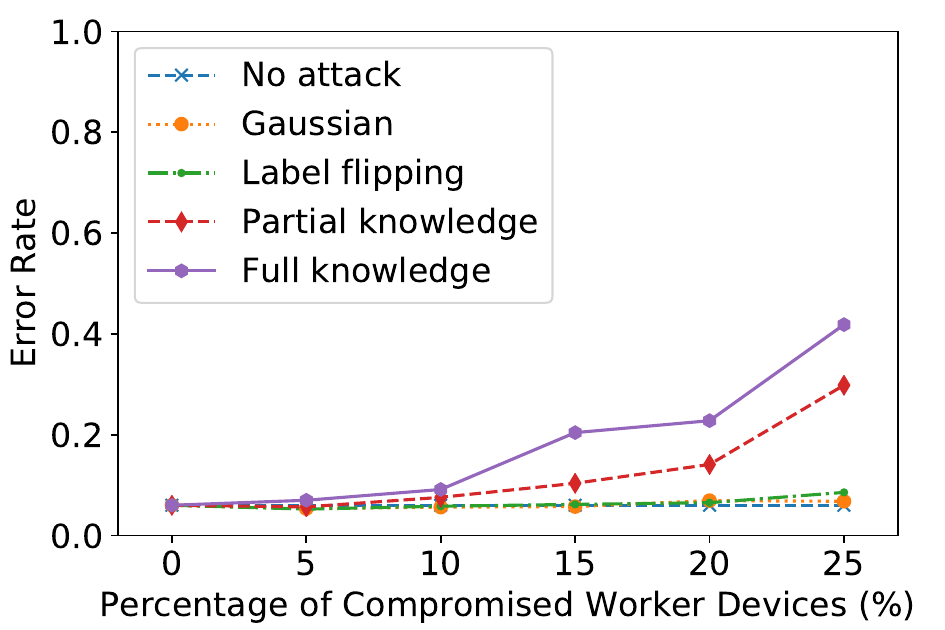}}
\subfloat[Median]{\includegraphics[width=0.33 \textwidth]{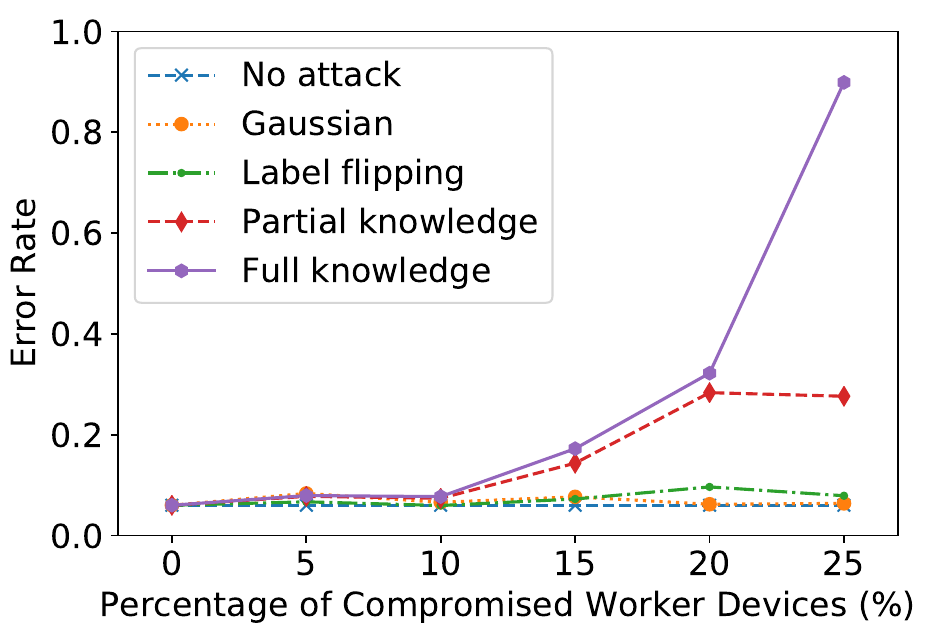}}
\caption{Testing error rates for different attacks as we have more compromised worker devices on MNIST. (a)-(c): LR classifier and (d)-(f): DNN classifier.}
\vspace{-4mm}
\label{MNIST-worker}
\end{figure*}

\noindent
\neil{{\bf Our attacks are effective:} Table~\ref{overallresults} shows the testing error rates of the compared attacks on the four datasets.  
First, these results show that our attacks are effective and substantially outperform existing attacks, i.e., our attacks result in higher error rates. For instance, when dataset is  MNIST,   classifier is LR, and aggregation rule is Krum, our partial knowledge attack increases the error rate from  0.14 to 0.72 (around 400\% relative increase). 
 Gaussian attacks only increase the error rates in several cases, e.g., median aggregation rule for Fashion-MNIST, and  trimmed mean and median for CH-MNIST.  Label flipping attacks can increase the error rates for DNN classifiers in some cases but have limited success for LR classifiers. } 

Second, Krum is less robust to our attacks than trimmed mean and median, except on Breast Cancer Wisconsin (Diagnostic) where Krum is comparable to median.   A possible reason why trimmed mean and median outperform Krum is that Krum picks one local model as the global model, while trimmed mean and median aggregate multiple local models to update the global model (the median selects one local model parameter for each model parameter, but the selected parameters may be from different local models). Trimmed mean is more robust to our attacks in some cases while median is more robust in other cases. \alan{Third, we observe that the error rates may depend on the data dimension. For instance, MNIST and Fashion-MNIST have 784 dimensions, CH-MNIST has 4096 dimensions, and Breast Cancer Wisconsin (Diagnostic) has 30 dimensions. For the DNN classifiers, the error rates are higher on CH-MNIST  than on other datasets in most cases, while the error rates are lower on  Breast Cancer Wisconsin (Diagnostic)  than on other datasets in most cases.}  

\alan{We note that federated learning may have higher error rate than centralized learning, even if robustness feature is not considered (i.e., mean aggregation rule is used). For instance, {the DNN classifiers respectively achieve testing error rates 0.01, 0.08, 0.07, and 0.01 in centralized learning on the four datasets,} while they respectively achieve testing error rates 0.04, 0.09, 0.09, and 0.01 in federated learning with the mean aggregation rule on the four datasets. 
However,  in the scenarios where users' training data can only be stored on their edge/mobile devices, e.g., for privacy purposes, centralized learning is not applicable and federated learning may be the only option even though its error rate is higher. Compared to the mean aggregation rule, Byzantine-robust aggregation rule increases the error rate without attacks. However, if Byzantine-robust aggregation rule is not used, a single malicious device can make the learnt global model totally useless~\cite{Blanchard17,Yin18}. To summarize, in the scenarios where users' training data can only be stored on their edge/mobile devices and there may exist attacks, Byzantine-robust federated learning may be the best option, even if its error rate is higher. }

\begin{figure*}[!t]
\centering
\subfloat[Krum]{\includegraphics[width=0.33 \textwidth]{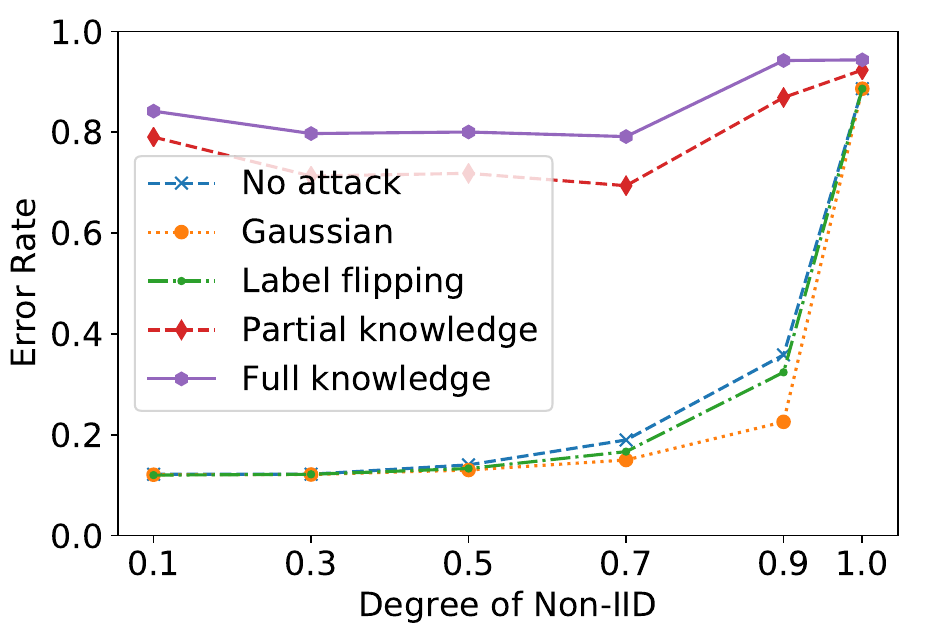}}
\subfloat[Trimmed mean]{\includegraphics[width=0.33 \textwidth]{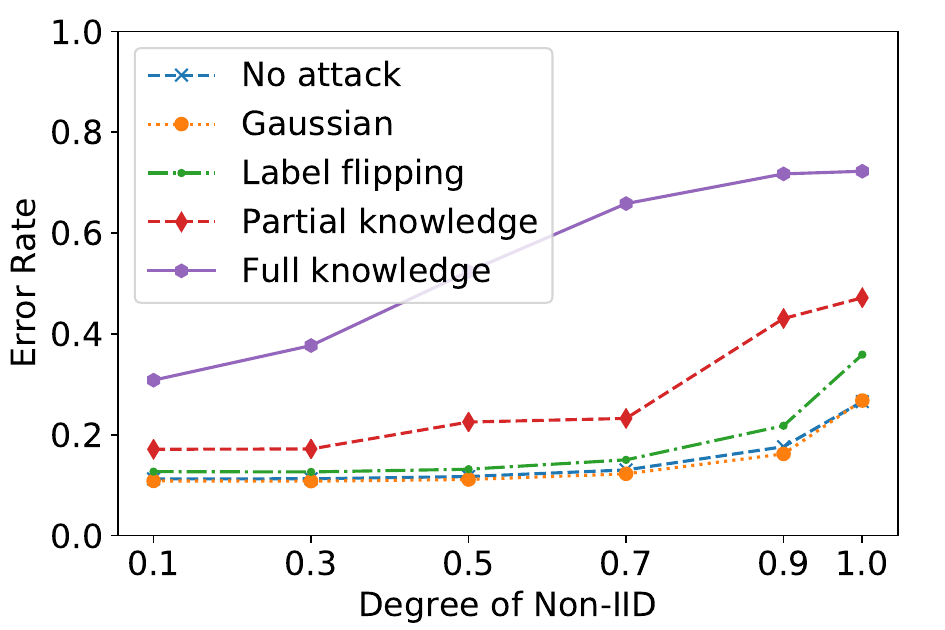}}
\subfloat[Median]{\includegraphics[width=0.33 \textwidth]{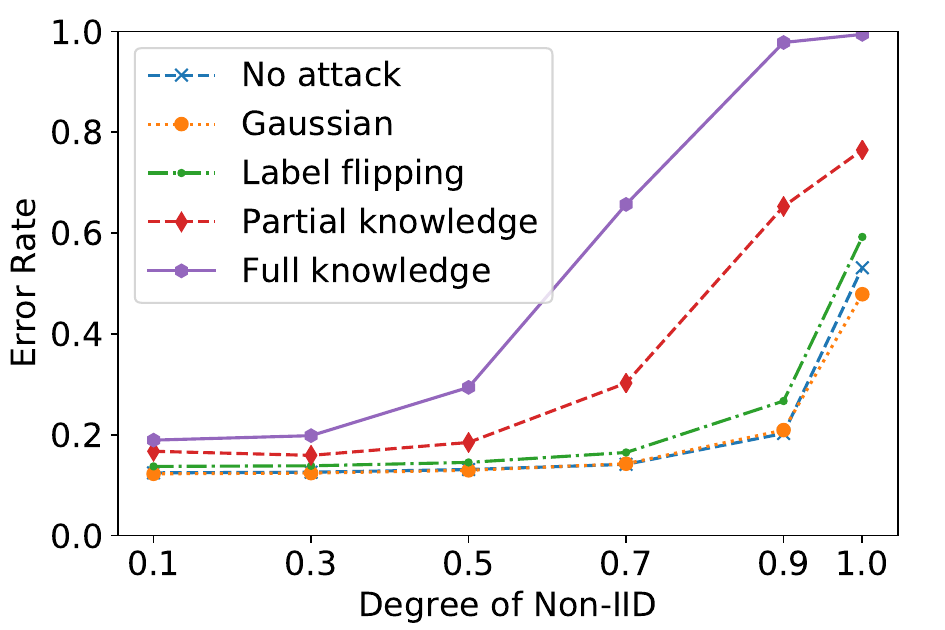}}

\subfloat[Krum]{\includegraphics[width=0.33 \textwidth]{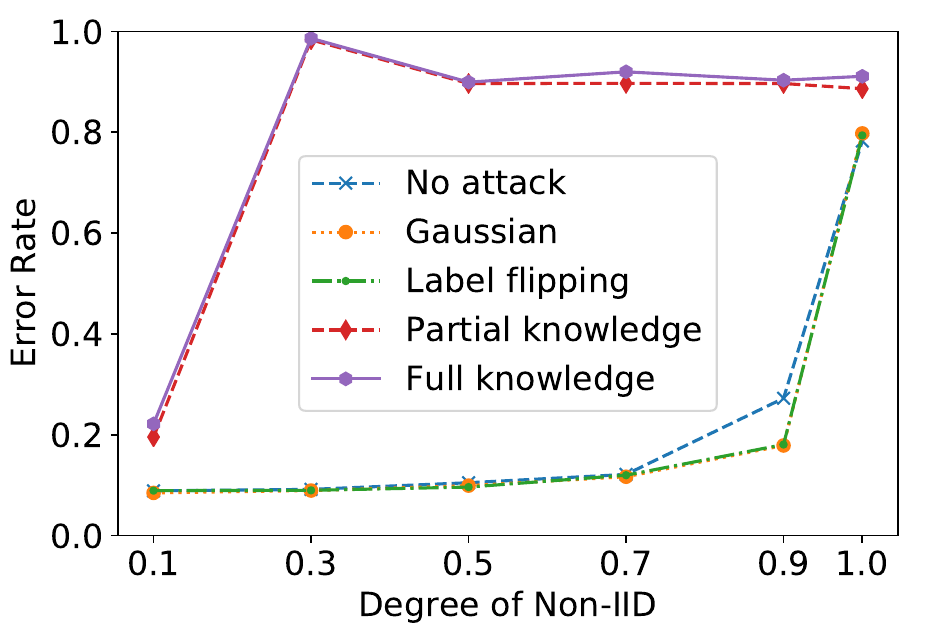}}
\subfloat[Trimmed mean]{\includegraphics[width=0.33 \textwidth]{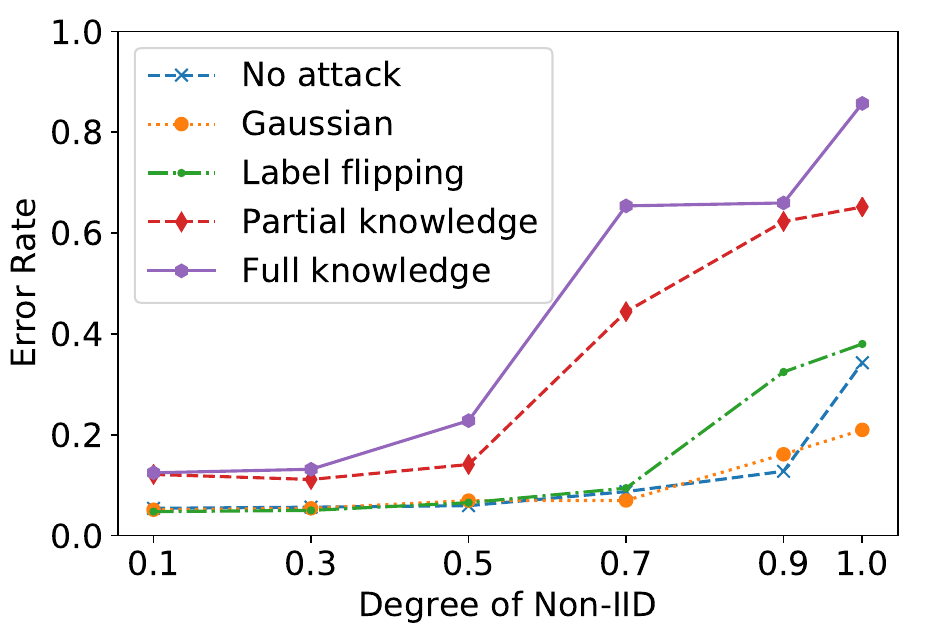}}
\subfloat[Median]{\includegraphics[width=0.33 \textwidth]{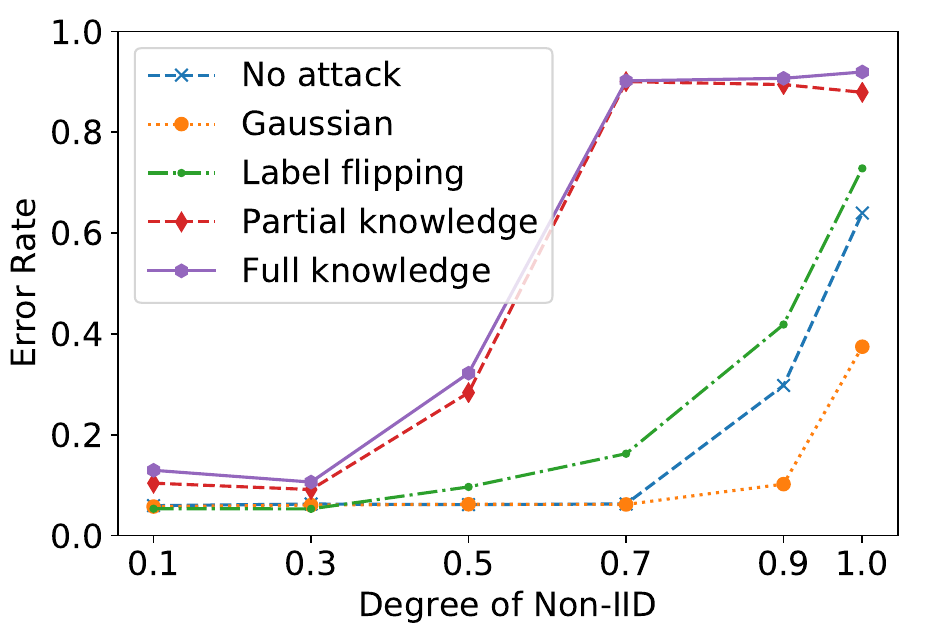}}
\vspace{-2mm}
\caption{Testing error rates for different attacks as we increase  the degree of non-IID on MNIST.  (a)-(c): LR classifier and (d)-(f): DNN classifier.}
\vspace{-3mm}
\label{MNIST-noniid}
\end{figure*}

\myparatight{Impact of the percentage of compromised worker devices} Figure~\ref{MNIST-worker} shows the error rates of different attacks as the percentage of compromised worker devices increases on MNIST.  
 Our attacks increase the error rates significantly as we compromise more worker devices; label flipping only slightly increases the error rates; and Gaussian attacks have no notable impact on the error rates. Two exceptions are that Krum's error rates decrease when the percentage of compromised worker devices increases from 5\% to 10\% in Figure~\ref{label:MNIST-LR-Krum-federated} and from 10\% to 15\% in Figure~\ref{label:MNIST-CNN-Krum-federated}. We suspect the reason is that Krum selects one local model as a global model in each iteration. We have similar observations on the other datasets. Therefore, we omit the corresponding results for simplicity.

\myparatight{Impact of the degree of non-IID in federated learning} Figure~\ref{MNIST-noniid} shows the error rates for the compared attacks for different degrees of non-IID on MNIST.  
 Error rates of all attacks including no attacks increase as we increase the degree of non-IID, except that the error rates of our attacks to Krum  fluctuate as the degree of non-IID increases. A possible reason is that as the local training datasets on different worker devices are more non-IID, the local models are more diverse, leaving more room for attacks. For instance, an extreme example is that if the local models on the benign worker devices are the same, it would be harder to attack the aggregation rules, because their aggregated model would be more likely to depend on the benign local models.

\begin{figure*}[!t]
\centering
\vspace{-2mm}
\subfloat[]{\includegraphics[width=0.33 \textwidth]{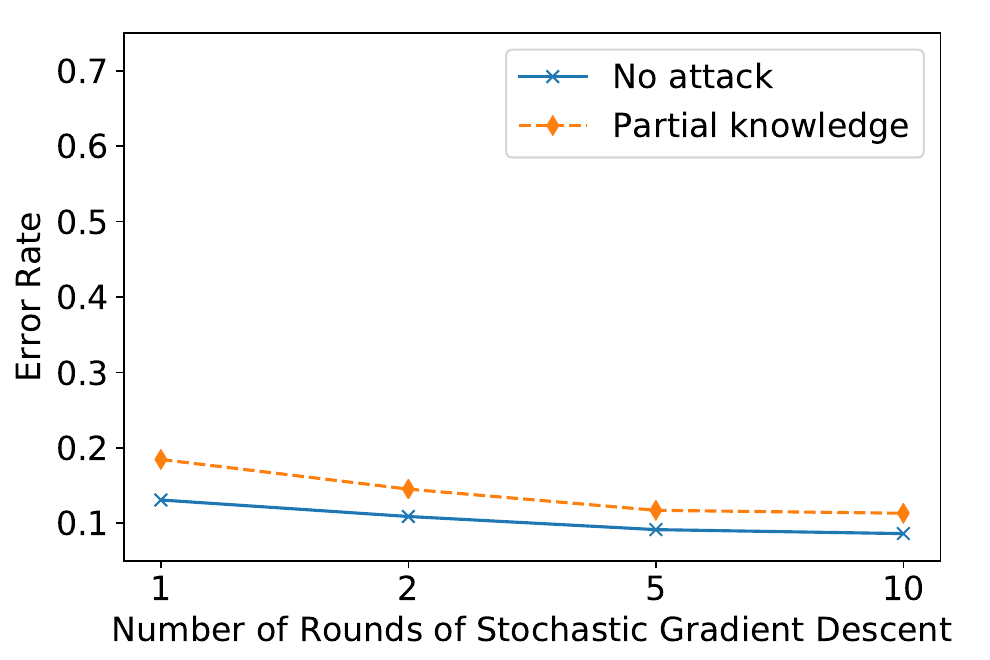}\label{impact:round}}
\subfloat[]{\includegraphics[width=0.33 \textwidth]{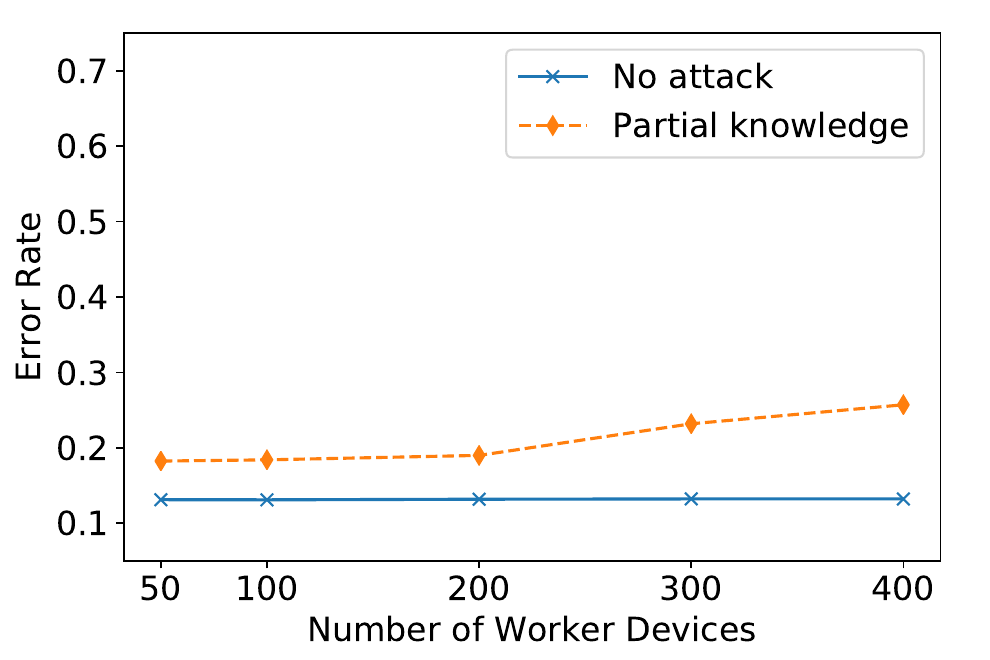}\label{impact:worker}}
\subfloat[]{\includegraphics[width=0.33 \textwidth]{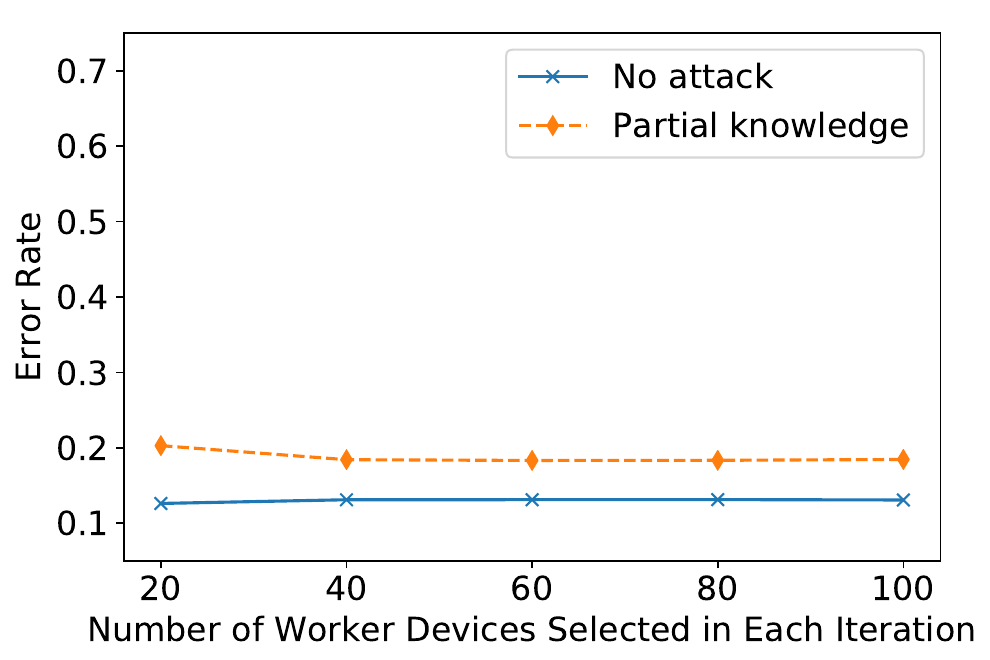}\label{impact:subset}}
\vspace{-2mm}
\caption{\alan{(a) Impact of the number of rounds of stochastic gradient descent worker devices use to update their local models in each iteration on our attacks. (b) Impact of the number of worker devices on our attacks. (c) Impact of the number of worker devices selected in each iteration on our attacks. MNIST, LR classifier, and median are used.}}
\vspace{-3mm}
\label{MNIST-LR-parasetting}
\end{figure*}

\begin{figure*}[!t]
\centering
\subfloat[]{\includegraphics[width=0.33 \textwidth]{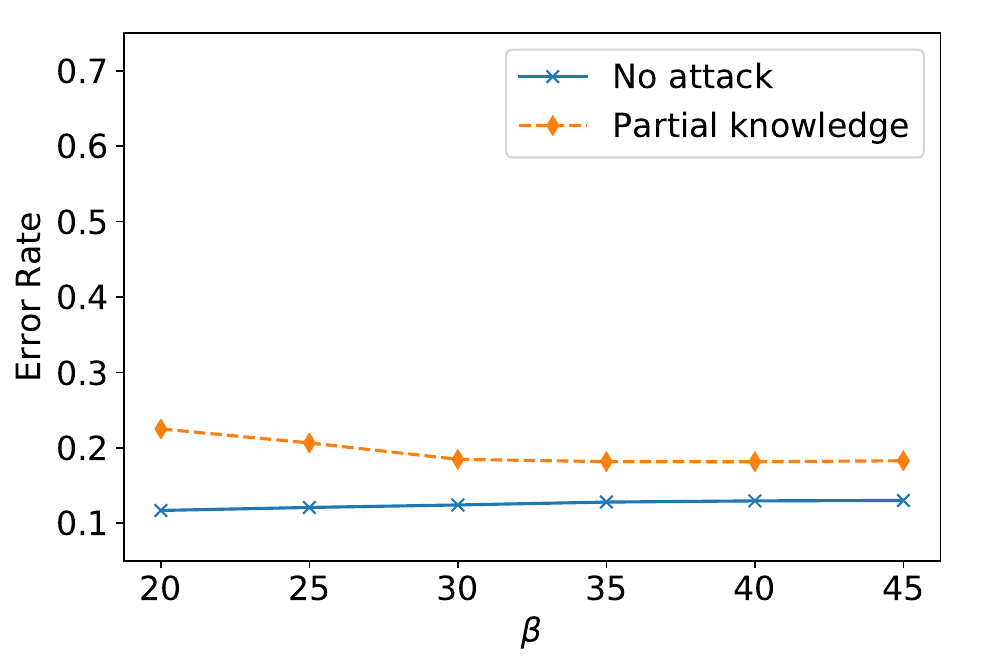}\label{MNIST-LR-Trim-beta}}
\subfloat[]{\includegraphics[width=0.33 \textwidth]{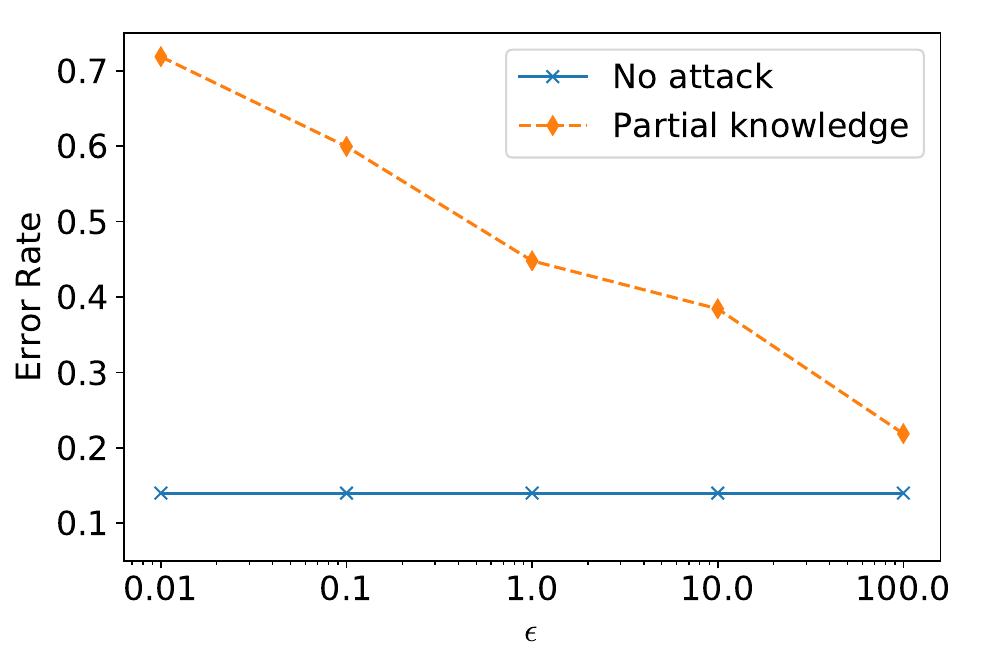}\label{MNIST-LR-Krum-epsilon}}
\subfloat[]{\includegraphics[width=0.33 \textwidth]{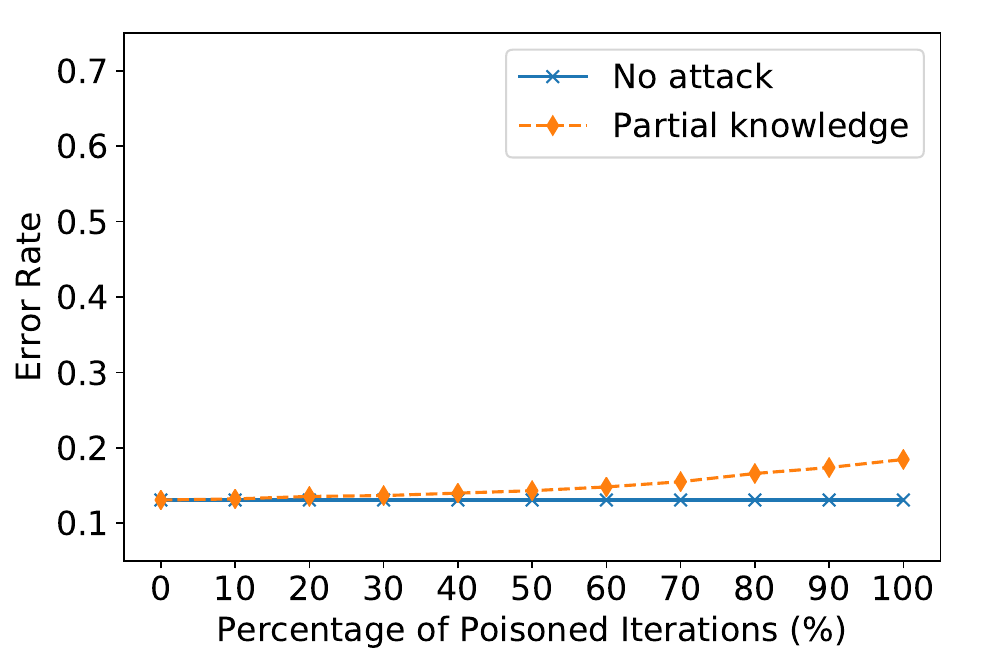}\label{MNIST-LR-Median-randround}}
\vspace{-2mm}
\caption{\alan{(a) Testing error rates of the trimmed mean aggregation rule when using different $\beta$. (b) Testing error rates of the Krum aggregation rule when our attack uses different $\epsilon$. (c) Testing error rates of the median aggregation rule when our attacks poison a certain fraction of randomly selected iterations of federated learning. MNIST and LR classifier are used.}}
\vspace{-3mm}
\label{MNIST-LR-parasetting1}
\end{figure*}

\myparatight{Impact of different parameter settings of federated learning algorithms} We study the impact of various parameters in federated learning including the number of rounds of stochastic gradient descent each worker device performs, number of worker devices,  number of worker devices selected to update the global model in each iteration, and $\beta$ in trimmed mean. \alan{In these experiments, we use MNIST and the LR classifier for simplicity. Unless otherwise mentioned, we consider  median, as median is more robust than Krum and does not require configuring extra parameters (trimmed mean requires configuring  $\beta$). Moreover, for simplicity, we consider partial knowledge attacks as they are more practical.}   

Worker devices can perform multiple rounds of stochastic gradient descent to update their local models. Figure~\ref{impact:round} shows the impact of the number of rounds on the testing error rates of our attack. The testing error rates decrease as we use more rounds of stochastic gradient descent for both no attack and our partial knowledge attack. This is because more rounds of stochastic gradient descent lead to more accurate local models, and  the local models on different worker devices are less diverse, leaving a smaller attack space. However, our attack still increases the error rates substantially even if we use more rounds. For instance, our attack still increases the error rate by more than 30\% when using 10 rounds of stochastic gradient descent. We note that a large number of rounds result in large computational cost for worker devices, which may be unacceptable for resource-constrained devices such as mobile phones and IoT devices.
 
Figure~\ref{impact:worker} shows the testing error rates of our attack as the number of worker devices increases, where 20\% of worker devices are compromised. Our attack is more effective (i.e., testing error rate is larger) as the federated learning system involves more worker devices. \alan{We found a possible reason is that our partial knowledge attacks can more accurately estimate the changing directions with more worker devices. For instance, for trimmed mean of the DNN classifier on MNIST, our partial knowledge attacks can correctly estimate the changing directions of 72\% of the global model parameters on average when there are 50 worker devices, and this fraction increases to 76\% when there are 100 worker devices.}  

In federated learning~\cite{McMahan17}, the master device could randomly sample some worker devices and send the global model to them; the sampled worker devices update their local models and send the updated local models to the master device; and the master device updates the global model using the local models from the sampled worker devices. Figure~\ref{impact:subset} shows the impact of the number of worker devices selected in each iteration on the testing error rates of our attack, where the total number of worker devices is 100. \alan{Since the master device randomly selects a subset of worker devices  in each iteration, a smaller number of compromised worker devices are selected in some iterations, while a larger number of compromised worker devices are selected in other iterations. On average, among the selected worker devices, $\frac{c}{m}$ of them are compromised ones, where $c$ is the total number of compromised worker devices and $m$ is the total number of worker devices. Our Figure~\ref{MNIST-worker} shows that our attacks become effective when $\frac{c}{m}$ is larger than 10\%-15\%.  Note that an attacker can inject a large number of fake devices to a federated learning system, so $\frac{c}{m}$ can be large.}

The trimmed mean aggregation rule has a parameter $\beta$, which should be at least the number of compromised worker devices. Figure~\ref{MNIST-LR-Trim-beta} shows the testing error rates of no attack and our partial knowledge attack as $\beta$ increases. Roughly speaking, our attack is less effective (i.e., testing error rates are smaller) as more local model parameters are trimmed. This is because our crafted local model parameters on the compromised worker devices are more likely to be trimmed when the master device trims more local model parameters. However, the testing error of no attack also \alan{slightly} increases as $\beta$ increases. 
The reason is that more benign local model parameters are trimmed and the mean of the remaining local model parameters becomes less accurate. The master device may be motivated to use a smaller $\beta$ to guarantee performance when there are no attacks.

\myparatight{Impact of the parameter $\epsilon$ in our attacks to Krum} Figure~\ref{MNIST-LR-Krum-epsilon} shows the error rates of the Krum aggregation rule when our attacks use different $\epsilon$, where MNIST dataset and LR classifier are considered. We observe that our attacks can effectively increase the error rates using a wide range of $\epsilon$. Moreover, our attacks achieve larger error rates when $\epsilon$ is smaller. This is because when $\epsilon$ is smaller, the distances between the compromised local models are smaller, which makes it more likely for Krum to select the local model crafted by our attack as the global model. 

\myparatight{Impact of the number of poisoned iterations} Figure~\ref{MNIST-LR-Median-randround} shows the error rates of the median aggregation rule when our attacks poison the local models on the compromised worker devices in a certain fraction of randomly selected iterations of federated learning. Unsurprisingly, the error rate increases when poisoning more iterations.

\begin{table}[!t]\renewcommand{\arraystretch}{1}
\centering
\caption{\neil{Testing error rates of attacks on the DNN classifier for MNIST when the master device chooses the global model with the lowest testing error rate.}}
\addtolength{\tabcolsep}{-4pt}
{
\begin{tabular}{|c|c|c|c|c|c|} \hline 
{\small } & {\small NoAttack} & {\small Gaussian} & {\small LabelFlip} &{\small Partial} & {\small Full}  \\ \hline
{\small Krum}  & {\small 0.10} & {\small 0.10} & {\small 0.09} &{\small 0.69} & {\small 0.70}  \\ \hline
{\small Trimmed mean}  & {\small 0.06} & {\small 0.06} & {\small 0.07} &{\small 0.12} & {\small 0.18}  \\  \hline
{\small Median}  & {\small 0.06} & {\small 0.06} & {\small 0.06} &{\small 0.11} & {\small 0.32}  \\ \hline
\end{tabular}
}
\label{adaptivetraining}
\vspace{-3mm}
\end{table}

\neil{\myparatight{Alternative training strategy} Each iteration results in a global model. Instead of selecting the last global model as the final model, an alternative training strategy is to select the global model that has the lowest \alan{testing} error rate.\footnote{\alan{We give advantages to the alternative training strategy since we use testing error rate to select the global model.}} Table~\ref{adaptivetraining} shows the testing error rates of various attacks on the DNN classifier for MNIST, when such alternative training strategy is adopted. \alan{In these experiments, our attacks attack each iteration of federated learning, and the column ``NoAttack'' corresponds to the scenarios where no iterations are attacked.} Compared to Table~\ref{attackeffective-MNIST-DNN}, this alternative training strategy is \alan{slightly} more secure against our attacks. However,  our attacks are still effective. For instance, for the Krum, trimmed mean, and median aggregation rules, our partial knowledge attacks still increase the testing error rates by 590\%, 100\%, and 83\%, respectively. } \alan{Another training strategy is to roll back to a few iterations ago if the master device detects an unusual increase of training error rate. However, such training strategy is not applicable because the training error rates of the global models  still decrease until convergence  when we perform our attacks in each iteration. In other words, there are no unusual increases of training error rates.}

\vspace{-4mm}
\subsection{Results for Unknown Aggregation Rule}
\label{unknownrule}

We craft local models based on one aggregation rule and show the attack effectiveness for other aggregation rules. 
Table~\ref{transfer_result} shows the transferability between aggregation rules, where MNIST and LR classifier are considered.  
   We observe different levels of transferability between aggregation rules. Specifically, Krum based attack can well transfer to trimmed mean and median, e.g., Krum based attack increases the error rate from 0.12 to 0.15 (25\% relative increase) for trimmed mean, and from 0.13 to 0.18 (38\% relative increase) for median. Trimmed mean based attack does not transfer to Krum but  transfers to median well. For instance, trimmed mean based attack increases the error rates from 0.13 to 0.20 (54\% relative increase) for median.

\begin{table}[!t]\renewcommand{\arraystretch}{1}
\centering
\caption{Transferability between aggregation rules. ``Krum attack'' and ``Trimmed mean attack'' mean that we craft the compromised local models based on the Krum and trimmed mean aggregation rules, respectively. Partial knowledge attacks are considered. The numbers are testing error rates.}
\centering
{
\begin{tabular}{|c|c|c|c|} \hline 
{\small } & {\small Krum} & {\small Trimmed mean} & {\small Median}  \\ \hline
{\small No attack} & {\small 0.14} & {\small 0.12} & {\small 0.13}\\ \hline
{\small Krum attack} & {\small 0.70} & {\small 0.15} & {\small 0.18}\\ \hline
{\small Trimmed mean attack} & {\small 0.14} & {\small 0.25} & {\small 0.20}\\ \hline
\end{tabular}
}
\vspace{-4mm}
\label{transfer_result}
\end{table}

\subsection{Comparing with Back-gradient Optimization based Attack}
\label{comparison-data}
Back-gradient optimization based attack (BGA)~\cite{munoz2017towards}  
 is state-of-the-art untargeted data poisoning attack for multi-class classifiers such as multi-class LR and DNN. \alan{BGA formulates a bilevel optimization problem, where the inner optimization is to minimize the training loss on the poisoned training data and the outer optimization is to find poisoning examples that maximize the minimal training loss in the inner optimization. BGA iteratively finds the poisoned examples by alternately solving the inner minimization and outer maximization problems.} 
We implemented BGA and  verified that our implementation can reproduce the results reported by the authors. 
 However, BGA is not scalable to the entire MNIST dataset. Therefore, we uniformly sample 6,000 training examples in MNIST, and we learn a 10-class LR classifier. Moreover, we assume 100 worker devices, randomly distribute the 6,000 examples to them, and assume 20 worker devices are compromised. 

\myparatight{Generating poisoned data} We assume an attacker has \emph{full knowledge} about the training datasets on all worker devices. Therefore, the attacker can use BGA to generate poisoned data based on the 6,000 examples. In particular, we run the attack for 10 days  on a GTX 1080Ti GPU, which generates 240 ($240/6000=4\%$) poisoned examples. We verified that these poisoned data can effectively increase the testing error rate if the LR classifier is learnt in a \emph{centralized} environment. In particular, the poisoned data can increase the testing error rate of the LR classifier from 0.10 to 0.16 (60\% relative increase) in centralized learning. 
However, in federated learning,  the attacker can only inject the poisoned data to the compromised worker devices. We consider two scenarios on how the attacker distributes the poisoned data to the compromised worker devices:  

{\bf Single worker.} In this scenario, the attacker distributes the poisoned data on a single compromised worker device. 

{\bf Uniform distribution.} In this scenario, the attacker distributes the poisoned data to the compromised worker devices uniformly at random.

\alan{We consider the two scenarios because they represent two extremes for distributing data (concentrated or evenly distributed) and we expect one extreme to maximize attack effectiveness.} 
\neil{Table~\ref{data_poison_result_knownagg} compares BGA with our attacks. We observe that BGA has limited success at attacking Byzantine-robust aggregation rules, while our attacks can substantially increase the testing error rates. We note that if the federated learning uses the \emph{mean} aggregation rule BGA is still successful. For instance, when the mean aggregation rule is used, BGA can increase the testing error rate by 50\% when distributing the poisoned data to the compromised worker devices uniformly at random. However, when applying our attacks for trimmed mean to attack the mean aggregation rule, we can increase the testing error rates substantially more (see the last two cells in the second row of Table~\ref{data_poison_result_knownagg}). }

\begin{table}[!t]\renewcommand{\arraystretch}{1}
\centering
\caption{\neil{Testing error rates of back-gradient optimization based attacks (SingleWorker and Uniform) and our attacks (Partial and Full).}}
\centering
\addtolength{\tabcolsep}{-4pt}
\begin{tabular}{|c|c|c|c|c|c|} \hline 
{\small } & {\small NoAttack} & {\small SingleWorker} & {\small Uniform} &{\small  Partial} & {\small Full}  \\ \hline
{\small Mean}  & {\small 0.10} & {\small 0.11} & {\small 0.15} &{\small 0.54} & {\small 0.69}  \\ \hline
{\small Krum}  & {\small 0.23} & {\small 0.24} & {\small 0.25} &{\small 0.85} & {\small 0.89}  \\ \hline
{\small Trimmed mean}  & {\small 0.12} & {\small 0.12} & {\small 0.13} &{\small 0.27} & {\small 0.32}  \\  \hline
{\small Median}  & {\small 0.13} & {\small 0.13} & {\small 0.14} &{\small 0.19} & {\small 0.21}  \\ \hline 
\end{tabular}
\vspace{-3mm}
\label{data_poison_result_knownagg}
\end{table}


\vspace{-3mm}
\section{Defenses}
\label{sec:defense}

We generalize RONI~\cite{barreno2010security} and TRIM~\cite{Jagielski18}, which were designed to defend against data poisoning attacks, to defend against our local model poisoning attacks. Both generalized defenses remove the local models that are potentially malicious before computing the global model in each iteration of federated learning.  One generalized defense removes the local models that have large negative impact on the error rate of the global model (inspired by RONI that removes training examples that have large negative impact on the error rate of the model), while the other defense removes the local models that result in large loss (inspired by TRIM that removes the training examples that have large negative impact on the loss). In both defenses, we assume the master device has a small \emph{validation dataset}. Like existing aggregation rules such as Krum and trimmed mean, we assume the master device knows the upper bound $c$ of the number of compromised worker devices. \alan{We note that our defenses make the global model slower to learn and adapt to new data as that data may be identified as from potentially malicious local models.}

\myparatight{Error Rate based Rejection (ERR)}
In this defense, we compute the impact of each local model on the error rate for the validation dataset and remove the local models that have large negative impact on the error rate. Specifically, suppose we have an aggregation rule. For each local model, we use the aggregation rule to compute a global model $A$ when the local model is included and a global model $B$ when the local model is excluded. We compute the error rates of the global models $A$ and $B$ on the validation dataset, which we denote as $E_A$ and $E_B$, respectively. We define $E_A-E_B$ as the \emph{error rate impact} of a local model. A larger error rate impact indicates that the local model increases the error rate more significantly if we include the local model when updating the global model. We remove the $c$ local models that have the largest error rate impact, and we aggregate the remaining local models to obtain an updated global model.

\myparatight{Loss Function based Rejection (LFR)} In this defense, we remove local models based on their  impact on the loss instead of error rate for the validation dataset. Specifically, like the error rate based rejection, for each local model, we compute the global models $A$ and $B$. We compute the \alan{cross-entropy loss function values} of the models $A$ and $B$ on the validation dataset, which we denote as  $L_A$ and $L_B$, respectively. Moreover, we define $L_A-L_B$ as the \emph{loss impact} of the local model. Like the error rate based rejection, we remove the $c$ local models that have the largest loss impact, and we aggregate the remaining local models to update the global model. 

\neil{\myparatight{Union (i.e., ERR+LFR)} In this defense, we combine ERR and LFR. Specifically, we remove the local models that are removed by either ERR or LFR.}

\begin{table}[!t]\renewcommand{\arraystretch}{1}
\centering
\caption{\neil{Defense results.  The numbers are testing error rates. The columns ``Krum'' and ``Trimmed mean'' indicate  the attacker's assumed aggregation rule when performing attacks, while the rows indicate the actual aggregation rules and defenses. Partial knowledge attacks are considered.}}
\centering
\addtolength{\tabcolsep}{-2pt}
\begin{tabular}{|c|c|c|c|} \hline 
{\small } & {\small No attack} & {\small Krum} & {\small Trimmed mean}  \\ \hline
{\small Krum } & {\small 0.14} & {\small 0.72} & {\small 0.13}\\ \hline
{\small Krum + ERR} & {\small 0.14} & {\small 0.62} & {\small 0.13}\\ \hline
{\small Krum + LFR} & {\small 0.14} & {\small 0.58} & {\small 0.14}\\ \hline 
{\small Krum + Union} & {\small 0.14} & {\small 0.48} & {\small 0.14}\\ \hline \hline
{\small Trimmed mean } & {\small 0.12} & {\small 0.15} & {\small 0.23}\\ \hline
{\small Trimmed mean + ERR} & {\small 0.12} & {\small 0.17} & {\small 0.21}\\ \hline
{\small Trimmed mean + LFR} & {\small 0.12} & {\small 0.18} & {\small 0.12}\\ \hline 
{\small Trimmed mean + Union} & {\small 0.12} & {\small 0.18} & {\small 0.12}\\ \hline \hline 
{\small Median } & {\small 0.13} & {\small 0.17} & {\small 0.19}\\ \hline
{\small Median + ERR} & {\small 0.13} & {\small 0.21} & {\small 0.25}\\ \hline
{\small Median + LFR} & {\small 0.13} & {\small 0.20} & {\small 0.13}\\ \hline
{\small Median + Union} & {\small 0.13} & {\small 0.19} & {\small 0.14}\\ \hline  
\end{tabular}
\label{defense-result}
\end{table}

\myparatight{Defense results} \alan{Table~\ref{defense-result} shows the defense results of ERR, FLR, and Union, where partial knowledge attacks are considered. We use the default parameter setting discussed in Section~\ref{parametersetting}, e.g., 100 worker devices, 20\% of compromised worker devices, MNIST dataset, and LR classifier.    Moreover, we  sample 100 testing examples uniformly at random as the validation dataset. Each row of the table corresponds to a defense, e.g., Krum + ERR means that the master device uses ERR to remove the potentially malicious local models and uses Krum as the aggregation rule. Each column indicates the attacker's assumed aggregation rule when performing attacks, e.g., the column ``Krum'' corresponds to attacks that are based on Krum. We have several observations.}

First, LFR is comparable to ERR or much more effective than ERR, i.e., LFR achieves similar or much smaller testing error rates than ERR. For instance, Trimmed mean + ERR and Trimmed mean + LFR achieve similar testing error rates (0.17 vs. 0.18) when the attacker crafts the compromised local models based on Krum. However, Trimmed mean + LFR achieves a much smaller testing error rate than Trimmed mean + ERR (0.12 vs. 0.21),  when the attacker crafts the compromised local models based on trimmed mean. \neil{Second, Union is comparable to LFR in most cases, except one case (Krum + LFR vs. Krum and Krum + Union vs. Krum) where Union is  more effective.} 

\neil{Third, LFR and Union can effectively defend against our attacks in some cases. For instance, Trimmed mean + LFR (or Trimmed mean + Union) achieves the same testing error rate for both no attack and attack based on trimmed mean. However,  our attacks are still  effective in other cases even if LFR or Union is adopted. For instance, an attack, which crafts compromised local models based on Krum, 
 still effectively increases the error rate from 0.14 (no attack) to 0.58 (314\% relative increase) for Krum + LFR. Fourth, the testing error rate grows in some cases when a defense is deployed. This is because the defenses may remove benign local models, which increases the testing error rate of the global model.}


\vspace{-3mm}
\section{Related Work}
\label{related}

Security and privacy of federated/collaborative learning are much less explored, compared to centralized machine learning. Recent studies~\cite{Hitaj17,Melis19,Nasr19}  explored privacy risks in federated learning, which are orthogonal to our study. 
 
\alan{\myparatight{Poisoning attacks} Poisoning attacks aim to compromise the integrity of the training phase of a machine learning system~\cite{barreno2006can}. The training phase  consists of two components, i.e., training dataset collection and learning process. 
Most existing poisoning attacks compromise the training dataset collection component, e.g., inject malicious data into the training dataset. These attacks are also known as \emph{data poisoning attacks})~\cite{rubinstein2009antidote,biggio2012poisoning,xiao2015feature,Jagielski18,poisoningattackRecSys16,YangRecSys17,Nelson08poisoningattackSpamfilter,munoz2017towards,Suciu18,Gu17,Chen17,Bagdasaryan18,shafahi2018poison,Fang18,Wang19}. Different from data poisoning attacks, our local model poisoning attacks compromise the learning process.

Depending on the goal of a poisoning attack, we can classify poisoning attacks into two categories, i.e., \emph{untargeted poisoning attacks}~\cite{rubinstein2009antidote,biggio2012poisoning,xiao2015feature,Jagielski18,poisoningattackRecSys16,YangRecSys17} and \emph{targeted poisoning attacks}~\cite{Nelson08poisoningattackSpamfilter,Suciu18,Liu18,Gu17,Chen17,Bagdasaryan18,shafahi2018poison,Bhagoji19}.  Untargeted poisoning attacks aim to make the learnt model have a high testing error indiscriminately for testing examples, which eventually result in a denial-of-service attack.  
In targeted poisoning attacks,  the learnt model produces  attacker-desired predictions for particular testing examples, e.g., predicting spams as non-spams and predicting attacker-desired labels for testing examples with a particular trojan trigger (these attacks are also known as \emph{backdoor/trojan attacks}~\cite{Gu17}). However, the testing error for other  testing examples is unaffected. 
Our local model poisoning attacks are untargeted poisoning attacks. Different from existing untargeted poisoning attacks that focus on centralized machine learning, our attacks are optimized for Byzantine-robust federated learning. We note that Xie et al.~\cite{Xie19} proposed inner product manipulation based untargeted poisoning attacks to Byzantine-robust federated learning including Krum and median, which is concurrent to our work.

\myparatight{Defenses}  Existing defenses were mainly designed for data poisoning attacks to centralized machine learning. They essentially aim to detect the injected malicious data in the training dataset. One category of defenses~\cite{Cretu08,barreno2010security,Suciu18,Tran18} detects malicious data based on their (negative) impact on the performance of the learnt model. 
For instance, 
Barreno et al.~\cite{barreno2010security} proposed \emph{Reject on Negative Impact (RONI)}, which measures the impact of each training example on the performance of the learnt model and removes the training examples that have large negative impact. Suciu et al.~\cite{Suciu18} proposed a variant of RONI (called tRONI) for targeted poisoning attacks. In particular, tRONI measures the impact of a training example on only the target classification and excludes training examples that have large impact. 

Another category of defenses~\cite{Feng14,Liu17AiSec,Jagielski18,Steinhardt17} proposed new loss functions, optimizing which obtains model parameters and detects the injected malicious data simultaneously. For instance, 
Jagielski et al.~\cite{Jagielski18} proposed TRIM, which aims to jointly find a subset of training dataset with a given size and model parameters that minimize the loss function. The training examples that are not in the selected subset are treated as malicious data. 
These defenses are not directly applicable for our local model poisoning attacks because our attacks do not inject malicious data into the training dataset.

For federated learning, the machine learning community recently proposed several aggregation rules (e.g., Krum~\cite{Blanchard17}, Bulyan~\cite{Mhamdi18}, trimmed mean~\cite{Yin18}, median~\cite{Yin18}, and others~\cite{ChenPOMACS17}) that were claimed to be robust against Byzantine failures of certain worker devices. Our work shows that these defenses are not effective in practice against our optimized local model poisoning attacks that carefully craft local models on the compromised worker devices. Fung et al.~\cite{Fung18} proposed to compute weight for each worker device according to historical local models and take the weighted average of the local models to update the global model. However, their method can only defend against label flipping attacks, which can already be defended by existing Byzantine-robust aggregation rules. We propose ERR and LFR, which are respectively generalized from RONI and TRIM, to defend against our local model poisoning attacks. We find that these defenses are not effective enough in some scenarios, highlighting the needs of new defenses against our attacks. 

\myparatight{Other security and privacy threats to machine learning} Adversarial examples~\cite{barreno2006can,szegedy2013intriguing} aim to make a machine learning system predict labels as an attacker desires via adding carefully crafted noise to normal testing examples in the testing phase. Various methods (e.g.,~\cite{szegedy2013intriguing,goodfellow2014explaining,Papernot16,CarliniSP17,laskov2014practical,sharif2016accessorize,liu2016delving,PracticalBlackBox17,Athalye18}) were proposed to generate adversarial examples, and many defenses (e.g.,~\cite{goodfellow2014explaining,madry2017towards,Papernot16Distillation,detection2,detection1,xu2017feature,region}) were explored to mitigate them. Different from poisoning attacks, adversarial examples compromise the testing phase of machine learning. Both poisoning attacks and adversarial examples compromise the integrity of machine learning. An attacker could also compromise the confidentiality of machine learning. Specifically, an attacker could compromise the confidentiality of users' private training or testing data via various attacks such as \emph{model inversion attacks}~\cite{fredrikson2014privacy,fredrikson2015model}, \emph{membership inference attacks}~\cite{membershipInfer,membershipLocation,Melis19}, and \emph{property inference attacks}~\cite{Ateniese15,Ganju18}. Moreover, an attacker could also  compromise the confidentiality/intellectual property of a model provider via stealing its model parameters and hyperparameters~\cite{liang2016cracking,tramer2016stealing,WangHyper18}. 
}


\vspace{-3mm}
\section{Conclusion, Limitations, and Future Work}

We demonstrate that the federated learning methods, which the machine learning community claimed to be robust against Byzantine failures of some worker devices, are vulnerable to our local model poisoning attacks that manipulate the local models sent from the compromised worker devices to the master device during the learning process. In particular, to increase the error rates of the learnt global models, an attacker can craft the local models on the compromised worker devices such that the aggregated global model deviates the most towards the inverse of the direction along which the global model would change when there are no attacks. Moreover, finding such crafted local models can be formulated as optimization problems. We can generalize existing defenses for data poisoning attacks to defend against our local model poisoning attacks. Such generalized defenses are effective in some cases but are not effective enough in other cases. Our results highlight that we need new defenses to defend against our local model poisoning attacks. 

Our work is limited to untargeted poisoning attacks. It would be interesting to study targeted poisoning attacks to federated learning. Moreover, it is valuable future work to design new defenses against our local model poisoning attacks, e.g., new methods to detect compromised local models and new adversarially robust aggregation rules. 


\vspace{-3mm}
\section{Acknowledgements}

We thank the anonymous reviewers and our shepherd Nikita Borisov for constructive reviews and comments. This work was supported by NSF grant No.1937786.

\vspace{-3mm}


\bibliographystyle{plain}
\bibliography{refs,refs2}


\appendix
 
\begin{table}[!t]\renewcommand{\arraystretch}{0.9}
\caption{(a) The DNN architecture (input layer is not shown) used for MNIST and Fashion MNIST. (b) Testing error rates when applying attacks for Krum  to attack Bulyan.}
\vspace{-2mm}
\addtolength{\tabcolsep}{-4pt}
\subfloat[]{
\begin{tabular}{|c|c|} \hline 
{\small Layer Type} & {\small Size} \\ \hline
{\small Convolution + ReLU} & {\small$3\times3\times30$}\\ \hline
{\small Max Pooling} & {\small$2\times2$}\\ \hline
{\small Convolution + ReLU} & {\small$3\times3\times50$}\\ \hline
{\small Max Pooling} & {\small$2\times2$}\\ \hline
{\small Fully Connected + ReLU} & {\small 200}\\ \hline
{\small Softmax} & {\small 10 / 8}\\ \hline
\end{tabular}
\label{DNN-architecture}
}
\subfloat[]{
\begin{tabular}{|c|c|} \hline 
{\small } & {\small  Bulyan} \\ \hline
{\small No attack} & {\small 0.14} \\ \hline
{\small Partial Knowledge} & {\small 0.36} \\ \hline
{\small Full Knowledge} & {\small 0.38}\\ \hline
\end{tabular}
\label{bulyanResult}}
\vspace{-4mm}
\end{table}

\begin{table}[!t]\renewcommand{\arraystretch}{0.9}
\centering
\caption{\neil{Testing error rates of our attacks based on the deviation goal and directed deviation goal.}}
\vspace{-1mm}
\centering
\begin{tabular}{|c|c|c|c|} \hline 
{\small } & {\small Krum} & {\small Trimmed mean} &{\small Median} \\ \hline
{\small Deviation goal} & {\small 0.87} & {\small 0.10} & {\small 0.12}\\ \hline
{\small Directed deviation goal} & {\small 0.80} & {\small 0.52} & {\small 0.29}\\ \hline
\end{tabular}
\vspace{-3mm}
\label{deviationResult}
\end{table}

\vspace{-3mm}
\section{Attacking Bulyan}
\label{attackBulyan}

Bulyan is based on Krum. We apply our attacks for Krum to attack Bulyan. Table~\ref{bulyanResult} shows results of attacking Bulyan. The dataset is MNIST, the classifier is logistic regression, $m=100$, $c=20$, $\theta=m-2c$ (Bulyan selects $\theta$ local models using Krum), and $\gamma=\theta-2c$ (Bulyan takes the mean of $\gamma$ parameters). 
Our results show that our attacks to Krum can transfer to Bulyan. Specifically, our partial knowledge attack increases the error rate by around 150\%, while our full knowledge attack  increases the error rate by 165\%.

\section{Deviation Goal}
\label{goals}

The deviation goal  is to craft local models $\mathbf{w}_1', \mathbf{w}_2', \cdots, \mathbf{w}_c'$ for the compromised worker devices via solving the following optimization problem in each iteration:
\begin{align}
\label{problem0}
&\max_{\mathbf{w}_1', \cdots, \mathbf{w}_c'} ||\mathbf{w} - \mathbf{w}'||_1, \nonumber \\
\text{subject to } & \mathbf{w}=\mathcal{A}(\mathbf{w}_1, \cdots, \mathbf{w}_c, \mathbf{w}_{c+1}, \cdots, \mathbf{w}_m), \nonumber \\
& \mathbf{w}'=\mathcal{A}(\mathbf{w}_1', \cdots, \mathbf{w}_c', \mathbf{w}_{c+1}, \cdots, \mathbf{w}_m),
\end{align} 
where $||\cdot||_1$ is $L_1$ norm. 
We can adapt our attacks based on the directed deviation goal to the deviation goal. 
For simplicity, we focus on the full knowledge scenario.

\myparatight{Krum} Similar to the directed deviation goal, we make two approximations, i.e., $\mathbf{w}_{1}' = {\mathbf{w}}_{Re} - \lambda  $ and the $c$ compromised local models are the same. Then, we formulate an optimization problem similar to Equation~\ref{problem2}, except that  $\mathbf{w}_1'= {\mathbf{w}}_{Re} - \lambda \mathbf{s}$ is changed to $\mathbf{w}_1'= {\mathbf{w}}_{Re} - \lambda$. 
Like Theorem~\ref{theoremLambda}, we can derive an upper bound of $\lambda$, given which we use binary search to solve $\lambda$. After solving $\lambda$, we obtain $\mathbf{w}_{1}'$. Then, we randomly sample $c-1$ vectors whose Euclidean distances to $\mathbf{w}_{1}'$ are smaller than $\epsilon$ as the other $c-1$ compromised local models. 

\myparatight{Trimmed mean} Theoretically, we can show that the following attack can maximize the deviation of the global model:  we use any $c$ numbers that are larger than $w_{max,j}$ or smaller than $w_{min,j}$, depending on which one makes the deviation larger, as the $j$th local model parameters on the $c$ compromised worker devices. Like the directed deviation goal, when implementing the attack, we randomly sample the $c$ numbers in the interval [$w_{max,j}, b\cdot w_{max,j}$] (when $w_{max,j} > 0$) or [$w_{max,j}, w_{max,j}/b$] (when $w_{max,j} \leq 0$), or in the interval [$w_{min,j}/b, w_{min,j}$] (when $w_{min,j} > 0$) or [$b\cdot w_{min,j}, w_{min,j}$] (when $w_{min,j} \leq 0$), depending on which one makes the deviation larger. 

\noindent
{\bf Median:} We apply the attack for trimmed mean to median. 

\noindent
{\bf Experimental results:} Table~\ref{deviationResult} empirically compares the deviation goal and directed deviation goal, where  MNIST and  LR classifier are used. 
For Krum, both  goals achieve high testing error rates. However, for trimmed mean and median,  the directed deviation goal achieves significantly higher testing error rates than the deviation goal.

\vspace{-2mm}
\section{Proof of Theorem~\ref{theoremLambda}}
\label{appendix:proof}

We denote by ${\Gamma}_{\mathbf{w}}^{a}$ the set of $a$ local models among the crafted $c$ compromised local models and $m-c$ benign local models that are the closest to the local model $\mathbf{w}$ with respect to Euclidean distance. 
Moreover, we denote by ${\tilde{\Gamma}_{\mathbf{w}}^{a}}$  the set of 
$a$ benign local models that are the closest to $\mathbf{w}$ with respect to Euclidean distance. Since $\mathbf{w}_1'$ is chosen by Krum, we have the following:
\begin{align}
\sum_{l\in {\Gamma}_{\mathbf{w}_1'}^{m-c-2}} D^2(\mathbf{w}_l,\mathbf{w}_1') \le \min_{c+1\le i\le m}{\sum_{l\in {\Gamma}_{\mathbf{w}_i}^{m-c-2}}} D^2(\mathbf{w}_l,\mathbf{w}_i),
\end{align}
where $D(\cdot,\cdot)$ represents Euclidean distance. 
The distance between $\mathbf{w}_1'$ and the other $c-1$ compromised local models is 0, since we assume they are the same in the optimization problem in Equation~\ref{problem2} when finding $\mathbf{w}_1'$. Therefore, we have:
\begin{align}
\sum_{l\in {\tilde{\Gamma}_{\mathbf{w}_1'}^{m-2c-1}}} D^2(\mathbf{w}_l, \mathbf{w}_1') \le \min_{c+1\le i\le m}{\sum_{l\in {\Gamma}_{\mathbf{w}_i}^{m-c-2}}} D^2(\mathbf{w}_l,\mathbf{w}_i).
\end{align}

According to the triangle inequality, we have $D^2(\mathbf{w}_l, \mathbf{w}_1') \ge [D(\mathbf{w}_1', \mathbf{w}_{Re}) - D(\mathbf{w}_l,\mathbf{w}_{Re})]^2$. 
Since $D(\mathbf{w}_1',\mathbf{w}_{Re})=\Vert{\lambda\cdot \mathbf{s}}\Vert_2=\sqrt{d}\cdot\lambda$, we have:
\begin{align}
	\sum_{l\in {\tilde{\Gamma}_{\mathbf{w}_1'}^{m-2c-1}}} [\sqrt{d}\cdot\lambda - D(\mathbf{w}_l, \mathbf{w}_{Re})]^2 \le \min_{c+1\le i\le m}{\sum_{l\in {\Gamma}_{\mathbf{w}_i}^{m-c-2}}} D^2(\mathbf{w}_l,\mathbf{w}_i),\nonumber
\end{align}
which gives the following necessary condition:
\begin{align}
	\lambda \le & \sqrt{\frac{1}{(m-2c-1)d}  \cdot \min_{c+1\le i\le m}{\sum_{l\in {\tilde{\Gamma}_{\mathbf{w}_i}^{m-c-2}}}} D^2(\mathbf{w}_l,\mathbf{w}_i)} \nonumber\\
			&  +  \frac{1}{\sqrt{d}}\cdot \max_{c+1\le i\le m}{D(\mathbf{w}_i,\mathbf{w}_{Re})}.
\end{align}
The bound only depends on the before-attack local models. 


\end{document}